\newtheorem{theorem}{Theorem}
\newtheorem*{proof}{Proof}
\journal{Chaos, Solitons \& Fractals}
\newtheorem{thm}{Theorem}
\newtheorem{lem}[thm]{Lemma}
\newtheorem{cor}{Corollary}
\newtheorem{definition}{Definition}
\theoremstyle{definition}
\theoremstyle{remark}
\begin{document}

\begin{frontmatter}

\title{General non-linear imitation leads to limit cycles in eco-evolutionary dynamics}
\author{Yuan Liu$^1$}
\author{Lixuan Cao$^1$}
\author{Bin Wu$^1$\corref{mycorrespondingauthor}}
\cortext[mycorrespondingauthor]{Corresponding author (bin.wu@bupt.edu.cn)}
\address{$^1$ School of Sciences, Beijing University of Posts and Telecommunications, 100876, China}

\begin{abstract}
Eco-evolutionary dynamics is crucial to understand how individuals' behaviors and the surrounding environment interplay with each other. Typically, it is assumed that individuals update their behaviors via linear imitation function, i.e., the replicator dynamics. It has been proved that there cannot be limit circles in such eco-evolutionary dynamics. It suggests that eco-evolutionary dynamics alone is not sufficient to explain the widely observed fluctuating behavior in both the biological and social systems. We extrapolate from the linear imitation function to general imitation function, which can be non-linear. It is shown that the general imitation does not change the internal equilibrium and its local stability. It however leads to limit cycles, which are never present in classical eco-evolutionary dynamics. Moreover, the average cooperation level and average environment state agree with the first and the second component of the internal fixed point. Furthermore, we estimate the location of the emergent limit cycle. On the one hand, our results provide an alternative mechanism to make cooperators and defectors coexist in a periodic way in the "tragedy of the commons" and indicates the global eco-evolutionary dynamics is sensitive to the imitation function. Our work indicates that the way of imitation, i.e., the imitation function, is crucial in eco-evolutionary dynamics, which is not true for evolutionary dynamics with a static environment.
\end{abstract}

\begin{keyword}
 eco-evolutionary dynamics \sep imitation function \sep limit cycle
\end{keyword}

\end{frontmatter}


\section{Introduction}
Social dilemmas occur if individual interests differ from collective welfare in a limited public good \cite{dawes1980social,doebeli2005models,hector2007biodiversity}.
Individuals tend to adopt strategies that maximize their benefits.
However, the group benefit will be maximal if they leverage the public good with restraint.
Garrett Hardin terms this dilemma as ``tragedy of the commons'' and claims that it is inevitable \cite{hardin1998extensions}.
It is of central importance in biology and socialogy to answer why and how cooperation evolves by natural selection \cite{Nowak_2010, Rainey_2003, West_2006, _gren_2019, doi:10.1126/science.309.5731.93}.
The Prisoner's Dilemma is a paradigm to study this issue. 
In Prisoner's Dilemma, cooperators help others at a cost but defectors obtain benefits without any costs. 
The dynamics are typically modeled by the replicator equation in a well-mixed and inﬁnite population \cite{Nowak_2004, hofbauer_sigmund_1998, Nowak2006, doi:10.1126/science.1093411, TAYLOR1978145}. 
Imitation rule is widely adopted, which implies that an individual imitates other's strategy with a probability proportional to the difference between their payoffs. 
Over the past few decades, various mechanisms have been proposed to promote cooperation: direct reciprocity, indirect reciprocity, kin selection, network reciprocity and group selection \cite{PMID:17158317, NOWAK1998561, nowak1998, nowak2005, taylor1996, nowak1992, May2006}. 

The environmental feedback is proposed as a new mechanism to explain the emergence and maintenance of cooperation in social dilemmas \cite{weitz2016oscillating, Szolnoki2018, CAO2021111088, LIU2022127309}, including ``tradegy of the commons", vaccination dilemma and climate change dilemma \cite{ZHAO2021125963,arefin2020,liu2022}.
Environmental feedback is universal in natural systems \cite{lee2019social,grman2012ecological,stewart2014collapse,cortez2018destabilizing,grunert2021evolutionarily}.
For example, some bacteria produce siderophore to remove iron from the host, while some bacteria do not produce siderophores, but absorb siderophores after the combination of them. 
This kind of bacteria spread fast among siderophore producers when the content of siderophore is high, which will lead to the reduction of siderophore content after a period of time \cite{west2003cooperation,ross2015evolutionary,kamran2019}.
Besides, the epidemic dies out if individuals follow disease intervention measures (such as vaccination, social distancing, mask wearing, testing and isolation).
However, individuals become less vigilant and choose not to comply with public measures if epidemic is not serious \cite{Glaubitz_2020,Stanley2021,bauch2003group,bauch2004vaccination,galvani2007long}.
In other words, it is assumed that cooperators enhance the environment, replete environment breeds defection, then
defectors destroy the environment and depleted environment is conductive to evolution of cooperators in evolutionary game with feedback mechanism \cite{greig2004prisoner,gore2009snowdrift,menge2008evolutionary,beekman2001evolution}.

Many studies focus on the feedback between human behavior and environment \cite{weitz2016oscillating, Szolnoki2018, CAO2021111088, LIU2022127309, gong2020limit, liu2022, farahbakhsh2022modelling, Sanchez2013}. 
Weitz et al. has proposed the replicator dynamics of evolutionary game with feedback mechanism. 
They described the dilemma ``tradegy of the commons'' mathematically and draw a conclusion that ``tragedy of the commons'' can be averted, that is, cooperation appears \cite{weitz2016oscillating}.
Then the emergence of group cooperation in public goods game based on environmental feedback has been proposed \cite{wang2020eco}.
In the Weitz's model, individuals update their behavior via replicator equation, which results from the linear imitation function of payoff differences. 
However, the model is not sufficient to explain eco-evolutionary systems in which the internal period orbits are present. 
The phenomena of oscillatory coexistence are widely found in evolutionary dynamics. 
For example, in Lotka-Volterra dynamics, there are two species, one is a predator and the other is prey. 
It is assumed that the prey has an unlimited food supply and to reproduce exponentially, while the predator declines exponentially due to natural death or migration. 
The rate of predation upon the prey is assumed to be proportional to the rate at which the predators and the prey meet. 
It leads to oscillatory periodic solutions. 
Another example is vaccination in public health: the vaccinators pay costs to gain immunity. 
The herd immunity protects not only themselves but also the unvaccinators, who do not pay any cost. 
The unvaccinators, however, can pay more cost for recovery if they are infected. 
On the other hand, more vaccinators decrease the perceived vaccination risk since individuals would take it as safe if more are vaccinated. 
And low vaccination risk improves the vaccination level. 
In this case, the vaccinators and the unvaccinators can lead to oscillatory coexistence \cite{liu2022}. 
The oscillatory dynamics is also present in biological systems: A yeast secrets the enzymes to digest sucrose outside of the cell. 
Those who do not secrets exploit the enzyme produced by a neighbor cell. 
The two types of yeast oscillatory coexist in an everchanging environment. 
``The tragedy of the commons” suggests that the competing behavior of parasites that through acting selfishly eventually destroy their common host \cite{greig2004prisoner, doebeli2004evolutionary}. 
However, the study shows that within-host competition detracts from the ability of viruses to exploit the host when coinfection occurs \cite{turner1998sex, brown2002does, killingback2010diversity}. 
Besides, the phenomena including moderate fishing, water conservation and conversion of farmland to forests also reflect the existence of limit cycles. 
Therefore, the oscillatory coexistence is widely present ranging from biology to social systems. 
However, there is a gap between these oscillatory coexistence phenomena and the Weitz’s model, in which there cannot be any oscillatory coexistence. 
It indicates that additional mechanisms are required to explain these fluctuating behaviors in biological and social systems.

In fact, many studies have shown that imitation is a profound social learning process \cite{over2020, farmer2018} and individual's imitation behavior seems to be traceable to other simple mechanisms \cite{APESTEGUIA2007217,PhysRevE.94.012124,DUERSCH201288}.
Fermi update rule and replicator equation are usually adopted in evolutionary dynamics \cite{nowak2006evolutionary,zeeman1980population,nowak2004evolutionary, chowdhury2021eco, roy2022eco}.
Individuals tend to imitate others' more successful behavior \cite{bauch2005, Farahbakhsh2021}, such as in social systems \cite{masuda2014,over2020} and vaccination behavior \cite{fu2011,wu2010universality,traulsen2005}.
It is assumed that individuals imitate the opponent's strategy according to the probability of the difference in their payoffs as well as the imitation intensity that measuring how strong the payoff difference influences the imitation probability \cite{Arefin_2021,hofbauer1998evolutionary,wu2011evolutionary,Farahbakhsh2021}. 
A natural question arises: Can the way of imitation, i.e., the imitation function, play any role in the long term behavior in eco-evolutionary dynamics?
Motivated by this, we propose a class of eco-evolutionary dynamics, in which strategies coevolve with the environment with general imitation function. 
We analyze how the dynamical outcomes are changed by the general non-linear imitation rules.

\section{Model}
Let's consider an infinite and well mixed population.
The eco-evolutionary dynamics consist of two parts: strategy evolution and environment dynamics.

\textbf{Strategy evolution:} The fraction of cooperators in the population is denoted as $x$.
An individual is selected at random and subsequently another individual
is randomly selected in the population.
Let's denote that the first individual uses strategy \emph{i} and the other adopts strategy \emph{j}.
Note that both strategies \emph{i} and \emph{j} are either cooperation(\emph{C}) or defection(\emph{D}).
Then
the first individual imitates the second individual's strategy with probability $g(\beta(\pi_j-\pi_i))$,
where $\pi_i$ and $\pi_j$ are the average payoffs for the first individual and the second individual, respectively.
Genarally, $g$ is an imitation function, which fulfills two properties \cite{wu2011evolutionary, mcavoy2015}:\\
$(i)$ $g$ falls into internal $[0,1]$ with $g(x)>0$ for any $x \in R$, since $g$ is a probability.\\
$(ii)$ $g$ is a strictly increasing and continuously differentiable function.\\
In addition, the imitation function $g(x)$ can be generated by any distribution function of a random variable.
Non-negative parameter $\beta$
denotes the imitation intensity, measuring how strong
the payoff difference between the first and the second individual influences the imitation probability.
When imitation intensity $\beta$ is high, i.e., $\beta\to +\infty$, it implies that focal individual is more likely to learn the model individual's strategy, even though the opponent's payoff is slightly higher than the focal individual.
Noteworthily, the selection intensity mirrors the inverse temperature in statistical physics.

\textbf{Environment dynamics:} 
We denote the environmental state as $\emph{n}$, which is confined in $[0,1]$.
Consider an environment-dependent payoff matrix $A(n)$, which interpolates between two payoff matrixes:
\begin{equation}\label{1}
	A(n)=(1-n)
	\begin{bmatrix}
			R_0 & S_0 \\
			T_0 & P_0 \\
		\end{bmatrix}
	+n
	\begin{bmatrix}
			R_1 & S_1 \\
			T_1 & P_1 \\
		\end{bmatrix},
\end{equation}
where the restrictions of $R_0>T_0,S_0>P_0,$ $R_1<T_1, S_1<P_1$ are satisfied.
And the payoffs of cooperators and defectors are:
\begin{equation*} 
	\pi_C(x,n)=((R_1-R_0-S_1+S_0)n+(R_0-S_0))x+(S_1-S_0)n+S_0,
\end{equation*} 
\begin{equation*}
	\pi_D(x,n)=((T_1-T_0-P_1+P_0)n+(T_0-P_0))x+(P_1-P_0)n+P_0.
\end{equation*} 
When $n = 1$, the environment is replete and ``tragedy of the commons'' occurs.
The game dominated by defection strategy is played ($R_1<T_1, S_1<P_1$).
On the contrary, when $n = 0$, it refers to depleted and the game dominated by cooperation strategy is played ($R_0>T_0, S_0>P_0$).
It implies that mutual cooperation is a Nash equilibrium under the depleted and defection is a Nash equilibrium under the replete environment.
We adopt the payoff matrix Eq.~\ref{1} and assume that $R_0>T_0, S_0>P_0, R_1<T_1, S_1<P_1$ so that the mutual cooperation and defection are Nash equilibria
under the depleted environment and the replete environment, respectively.

We assume that: $(i)$ The more cooperators there are, the better environment it is;
$(ii)$ The more defectors there are, the worse environment it is.
In one word, it is a type of feedback in which cooperators enhance the environment, replete environment breeds defection, then
defectors destroy the environment and depleted environment is conductive to evolution of cooperators.
Thus the imitation-environment coevolutionary model reads
\begin{equation}\label{4}
    \begin{cases}
        \dot{x}=x(1-x)(g(\beta(\pi_C-\pi_D))-g(\beta(\pi_D-\pi_C)))\\
        \dot{n}=\epsilon n(1-n)(-1+(1+\theta)x)
    \end{cases},
\end{equation}
where $\epsilon>0$ is the relative evolution speed, and $\theta>0$ is the relative speed of enhancement rates compared to degradation rates of cooperators and defectors, respectively. 
In particular, if $\frac{g(\beta(\pi_C-\pi_D))-g(\beta(\pi_D-\pi_C))}{\pi_C-\pi_D}$ is considered as time scale, the resulting dynamics is classical eco-evolutionary dynamics. 
A cooperator encounters a defector with probability $x(1-x)$.
Then the cooperator imitates the defector's strategy with a probability of imitation functions, including but not limited to the well-known Fermi function \cite{traulsen2010human}.
The logistic term $n(1-n)$ ensures that the environment is restrained to $[0,1]$. 
In addition, the term $(-1+(1+\theta) x)$ describes the environmental feedback mechanism with the assumption that the more cooperators there are, the richer the environment is. 

To compare with the classical model of coevolutionary dynamics used replicator dynamics framework which is given by \cite{weitz2016oscillating}, we show the mathematical model:
\begin{align}
	\dot{x}&=x(1-x)(\pi_C(x,n)-\pi_D(x,n)),\label{2} \\
	\dot{n}&=\epsilon n(1-n)(-1+(1+\theta)x),\label{3}
\end{align}
Here $\epsilon>0$ is the relative evolution speed and
$\theta$ is the relative speed of enhancement rates compared to degradation rates of cooperators and defectors, respectively.
To study the tragedy of the commons, mutual defection is perceived as Nash equilibrium under the replete environment ($n=1$).
As for the depleted environment ($n=0$), it has different Nash equilibria cases.
For example, mutual cooperation is a Nash equilibrium under the depleted and defection is a Nash equilibrium under the replete environment.

\section{Results}
We demonstrate that the replicator equation described by Eq.~\ref{2} is equilibrant to the strategic evolution equation described by the first equation in Eq.~\ref{4}, provided that the environment is not changing.
In fact, the right-hand side of the first equation in Eq.~\ref{4} is the product of the right-hand side of Eq.~\ref{2} and $\frac{g(\beta(\pi_C-\pi_D))-g(\beta(\pi_D-\pi_C))}{\pi_C-\pi_D}$, which is defended as $B(x,n)$.
Since $g$ is strictly increasing, $B(x,n)$ is positive.
Thus the solutions of the differential Eq.~\ref{2} and Eq.~\ref{4} can be mapped to each other by time rescaling \cite{hofbauer1998evolutionary}.
Noteworthily, this is true for all the imitation function $g$.
It implies that imitation functions have no qualitative effects on the replicator dynamics for static environment.
Then a natural question arises:
does the imitation function $g$ have any qualitative impact on the eco-evolutionary dynamics, in which the evolution of strategy is coupled with the environment?

\subsection{The local stability of the equilibria \textit{cannot} be altered by the imitation mode}
When the parameters satisfies conditions $R_0>T_0,S_0>P_0,R_1<T_1, S_1<P_1$ and $\theta>0$, there are five fixed points in the imitation-environment coevolution dynamics for Eq.~\ref{4}: \\
$(i)$ $(x^*=0,n^*=0)$, all individuals adopt defection in the depleted environment;\\
$(ii)$ $(x^*=1,n^*=0)$, all individuals adopt cooperation in the depleted environment;\\
$(iii)$ $(x^*=0,n^*=1)$, all individuals adopt defection in the replete environment;\\
$(iv)$ $(x^*=1,n^*=1)$, all individuals adopt cooperation in the replete environment.\\
$(v)$ $(x^*=\frac{1}{1+\theta},n^*=\frac{(T_0-R_0)+\theta(P_0-S_0)}{(R_1-T_1+T_0-R_0)+\theta(S_1-P_1+P_0-S_0)})$, cooperators and defectors coexist in an intermediate environment. 
Noteworthily, the fifth equilibrium is not dependent on the imitation function.\\
The first four fixed points are on the ``boundary'', and the fifth is an interior fixed point. 
A topological circle of the four ``boundary'' fixed points and connecting heteroclinic orbits is called heteroclinic cycle (See Fig.~\ref{heteroclinic cycle}).
\begin{figure}
	\begin{minipage}{\linewidth} 
	\centerline{\includegraphics[scale=0.25]{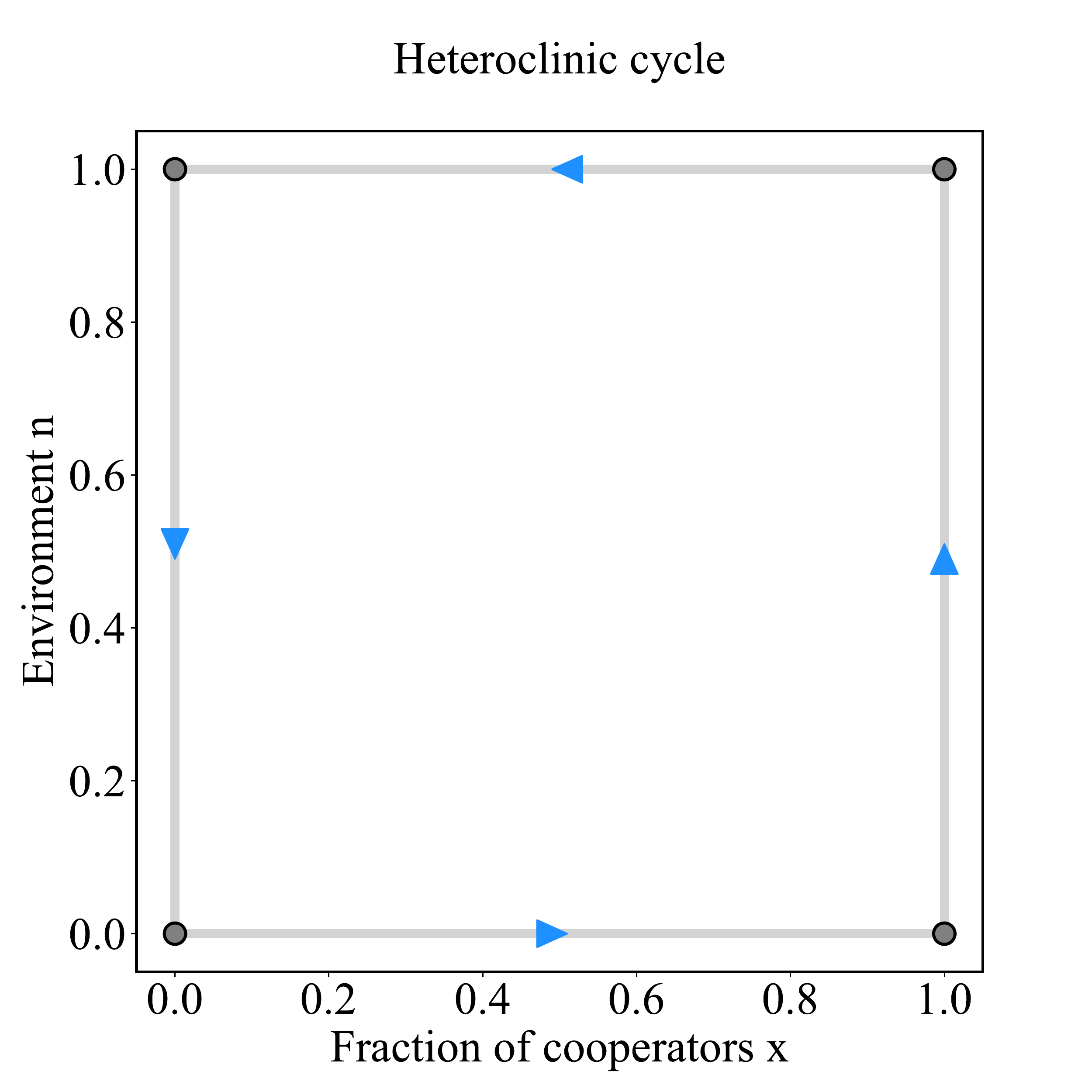}}
	\end{minipage}
	\caption{\textbf{Heteroclinic cycle in eco-evolutionary dynamics with general imitations.} Consider the coevolutionary dynamics given by Eq.~\ref{4}, the heteroclinic cycle incorporates four ``boundary'' fixed points $(0,0)$, $(1,0)$, $(0,1)$ and $(1,1)$ and four heteroclinic orbits.
	The heteroclinic cycle cycles counterclockwise.
	If the heteroclinic cycle is stable, it refers to an oscillating tragedy of commons.
	This heteroclinic cycle is present in classical eco-evolutionary model, and is also present for eco-evolutionary dynamics with the imitation function $g$.
	}\label{heteroclinic cycle} 
\end{figure}
The stability of these fixed points are given by Theorem.~\ref{Theorem.1}.

\begin{theorem}\label{Theorem.1}
 (Local stability) Consider the coevolutionary dynamics given by Eq.~\ref{4}, then,

\noindent $(i)$ Four `boundary'' fixed points are unstable.

\noindent $(ii)$ The interior fixed point is stable if
\begin{equation}\label{5}
\frac{P_1-S_1}{T_1-R_1}<\frac{S_0-P_0}{R_0-T_0}.
\end{equation}

\noindent $(iii)$ The interior fixed point is unstable if
\begin{equation}\label{6}
\frac{P_1-S_1}{T_1-R_1}>\frac{S_0-P_0}{R_0-T_0}.
\end{equation}
%
\end{theorem}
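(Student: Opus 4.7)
The plan is to linearise Eq.~(\ref{4}) at each of the five fixed points and read off stability from the Jacobian $J$. For any of the four boundary equilibria, the cross partials $\partial F/\partial n$ and $\partial G/\partial x$ carry factors $x(1-x)$ and $n(1-n)$ respectively and hence vanish, so $J$ is diagonal. Its $(1,1)$ entry, after the $x(1-x)$ derivative is taken, simplifies to $\pm[g(\beta\Delta\pi)-g(-\beta\Delta\pi)]$ with $\Delta\pi=\pi_C-\pi_D$ evaluated at the corner; strict monotonicity of $g$ ties its sign to the sign of $\Delta\pi$, which the standing payoff inequalities $R_0>T_0$, $S_0>P_0$, $R_1<T_1$, $S_1<P_1$ fix at each of the four corners. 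The $(2,2)$ entry equals $\pm\epsilon$ or $\pm\epsilon\theta$ with an explicit sign read off from $-1+(1+\theta)x$. A case check at $(0,0)$, $(1,0)$, $(0,1)$, $(1,1)$ shows the two diagonal entries always have opposite signs, so every corner is a saddle, establishing part~$(i)$.

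At the interior fixed point $(x^*,n^*)$ the two equilibrium conditions $\pi_C(x^*,n^*)=\pi_D(x^*,n^*)$ and $(1+\theta)x^*=1$ collapse most of the terms in $J$. Applying the chain rule to $F$ and using $g'(\beta\cdot 0)=g'(0)$ one obtains
\begin{equation*}
J(x^*,n^*)=\begin{pmatrix}\alpha P_x & \alpha P_n\\ \gamma & 0\end{pmatrix},\qquad \alpha=2\beta g'(0)x^*(1-x^*)>0,\qquad \gamma=\epsilon n^*(1-n^*)(1+\theta)>0,
\end{equation*}
where $P_x=\partial_x(\pi_C-\pi_D)$ and $P_n=\partial_n(\pi_C-\pi_D)$ at $(x^*,n^*)$. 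Hence $\mathrm{tr}\,J=\alpha P_x$ and $\det J=-\alpha\gamma P_n$, and stability is equivalent to $P_x<0$ together with $P_n<0$. Writing $a=R_0-T_0$, $b=S_0-P_0$, $c=T_1-R_1$, $d=P_1-S_1$ (all positive), a short computation gives $P_n=-(a+c)x^*-(b+d)(1-x^*)<0$ unconditionally, so $\det J>0$ automatically and the sign of the trace is the sole discriminant.

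To convert the trace condition into Eq.~(\ref{5}) I would exploit the fixed-point relation coming from $\pi_C=\pi_D$, namely $(1-n^*)u=n^*v$ with $u=ax^*+b(1-x^*)>0$ and $v=cx^*+d(1-x^*)>0$. Multiplying $P_x=(1-n^*)(a-b)+n^*(d-c)$ by $u$ and substituting $(1-n^*)u=n^*v$ produces $uP_x=n^*[v(a-b)+u(d-c)]$; the cross-terms cancel and leave $uP_x=n^*(ad-bc)$. Consequently $\mathrm{sign}(P_x)=\mathrm{sign}(ad-bc)$ independently of $x^*$ and $n^*$, and $P_x<0$ is exactly $ad<bc$, i.e.\ $d/c<b/a$, which is Eq.~(\ref{5}); the reverse inequality gives $\mathrm{tr}\,J>0$ and instability by Eq.~(\ref{6}). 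The only step I expect to require care is this final cancellation — the algebraic fact that, despite its apparent dependence on $n^*$, the quantity $uP_x$ collapses to the symmetric corner expression $n^*(ad-bc)$. This is what makes the criterion simultaneously independent of the imitation function $g$ (beyond $g'(0)>0$) and expressible as the clean ratio inequality in the statement.
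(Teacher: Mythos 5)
Your proposal is correct and follows essentially the same route as the paper: the boundary equilibria have diagonal Jacobians with diagonal entries of opposite sign (saddles), and at the interior point $\det J>0$ follows from $P_n<0$ so that the trace $\alpha P_x$ decides stability, with $\mathrm{sign}(P_x)=\mathrm{sign}(ad-bc)$ recovering the ratio condition. Your cancellation $uP_x=n^*(ad-bc)$ is just a slightly different packaging of the paper's identity $P_xP_n=bc-ad$, and both checks go through.
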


\begin{proof}
	The proof of Theorem.~\ref{Theorem.1} is shown in \ref{fixed points}.
\end{proof}

\textbf{Remark 1:} The local stability of internal fixed point is the same as that in the classical eco-evolutionary dynamics (Eq.~\ref{2}).
It is clear from the Theorem.~\ref{Theorem.1} that the stability condition of the internal equilibrium has nothing to do with the imitation function. 
In other words, the local stability of the internal fixed point \textit{cannot} be altered as long as individuals imitate based on the non-linear imitation function.

\subsection{The global eco-evolutionary dynamics can be sensitive to imitation functions}
Next, we investigate the global dynamics of Eq.~\ref{4}.
We first give the definitions of repelling and asymptotically stability of a heteroclinic cycle \cite{josef1994},
then we explore the stability of heteroclinic cycle, which incorporates four ``boundary'' fixed points $(0,0)$, $(1,0)$, $(0,1)$ and $(1,1)$ and four heteroclinic orbits (Fig.~\ref{heteroclinic cycle}).
The global dynamical behaviors in two-dimensional system depend not only on the local stability of the fixed points,
but also on the stability of the heteroclinic cycle.

\begin{definition}
	\label{def:1}
	Consider $\dot{x}=f(x)$ as an ODE defined on an open set $G\subseteq R^2$. The $\omega-$limit set of a point $x$ is the set of points that the flow from $x$ tends to in positive time, and the $\alpha-$limit set is in negative time.
\end{definition}

\begin{definition}
	\label{def:2}
	A heteroclinic cycle is an invariant topological circle $X$ consisting of the union of a set of equilibria {$\gamma_1$, $\dots$, $\gamma_k$} and orbits {$F_1$, $\dots$, $F_k$}, where $F_i$ is a heteroclinic orbit between $\gamma_i$ and $\gamma_{i+1}$; and $\gamma_{m+1} \equiv \gamma_1$.
	If $k = 1$ then the single equilibrium and connecting orbit form a homoclinic cycle.
\end{definition}

\begin{lem}
	\label{lem:01}
	\cite{10.1145/3372885.3373833,zbMATH02707188} The $\omega-$limit set, denoted as $\omega(x)$, is a closed set.
\end{lem}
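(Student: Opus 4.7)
The plan is to reduce this to a purely topological statement about the flow generated by the ODE and then appeal to the fact that arbitrary intersections of closed sets are closed. Write $\phi(t,x)$ for the solution at time $t$ starting from $x$; since trajectories of Eq.~\ref{4} remain in the compact invariant set $[0,1]^2$, $\phi(t,x)$ is defined for all $t \geq 0$. The key identity I would establish is
\[
\omega(x) = \bigcap_{T \geq 0} \overline{\{\phi(t,x) : t \geq T\}}.
\]
This is the standard reformulation of Definition~\ref{def:1}, and its proof amounts to unwrapping quantifiers: $y \in \omega(x)$ iff there is a sequence $t_n \to +\infty$ with $\phi(t_n,x) \to y$, which in turn is equivalent to saying every neighborhood of $y$ meets the tail orbit $\{\phi(t,x) : t \geq T\}$ for every $T \geq 0$, i.e.\ $y$ lies in the closure of each such tail.

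First I would verify the two inclusions in this identity carefully. The forward inclusion is immediate from the definition: if $\phi(t_n,x) \to y$ with $t_n \to \infty$, then for every $T$ eventually $t_n \geq T$, so $y$ is a limit point of the $T$-tail. For the reverse inclusion, if $y$ lies in every closure, then for each $k$ I can pick $t_k \geq k$ with $\|\phi(t_k,x) - y\| < 1/k$, producing the required sequence. Second, I would conclude: each tail closure is closed by construction, arbitrary intersections of closed sets are closed, hence $\omega(x)$ is closed.

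As a purely sequential alternative, I would take $y_k \in \omega(x)$ with $y_k \to y$ and build $t_k \to \infty$ via a diagonal argument. For each $k$, since $y_k \in \omega(x)$, there is $t_k > k$ with $\|\phi(t_k,x) - y_k\| < 1/k$; then the triangle inequality $\|\phi(t_k,x) - y\| \leq \|\phi(t_k,x) - y_k\| + \|y_k - y\|$ gives $\phi(t_k,x) \to y$, so $y \in \omega(x)$.

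There is no real obstacle here: the statement is a standard topological fact about limit sets and does not use any specific feature of the eco-evolutionary system beyond well-posedness of the flow on a compact set. The only care required is to make the quantifier manipulation in the identity above completely explicit, and, if the sequential version is chosen, to be precise that the diagonal choice can be made (which it can, since $\omega(x) \neq \emptyset$ ingredients are not needed — only membership of each $y_k$).
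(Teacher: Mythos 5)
Your proposal is correct: both the tail-closure characterization $\omega(x)=\bigcap_{T\geq 0}\overline{\{\phi(t,x):t\geq T\}}$ followed by ``arbitrary intersections of closed sets are closed,'' and the sequential diagonal argument, are complete and standard proofs of this fact. The paper itself offers no proof of Lemma~\ref{lem:01}, citing it as a known result from the references, and your argument is exactly the canonical one those sources contain, so there is nothing to reconcile.
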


\begin{lem}
 \label{lem:02}
 \cite{josef1994,M1997,Steindl1996HeteroclinicCI}
 \noindent $(i)$ The heteroclinic cycle $\Lambda$ is repelling when the following conditions are satisfied:
\begin{itemize}
  \item The heteroclinic cycle $\Lambda$ is a compact invariant subset of the boundary $\partial X$.
\end{itemize}
\begin{itemize}
  \item There is no $x\in int\,X$ with $\omega(x)\in\Lambda$.
\end{itemize}
\begin{itemize}
  \item There is a neighbourhood $U$ of $\Lambda$ in $X$ such that for each $x\in U \verb+\+ \partial X$,
  $\exists t\in R: x(t)\notin U$. ($\partial X$ is isolated near $\Lambda$.)
\end{itemize}

\noindent $(ii)$ The heteroclinic cycle $\Lambda$ is asymptotically stable whenever both of the following two conditions are satisfied:
\begin{itemize}
  \item The heteroclinic cycle $\Lambda$ is a compact invariant subset of the boundary $\partial X$.
\end{itemize}
\begin{itemize}
  \item Let $\Lambda_{(k)}\subset \Lambda$ for $k=1,\cdots, m$ ($m$ is the number of the fixed point in $\Lambda$) be
  such that for each $x\in \Lambda$, there is a $k$ with $\omega(x)\in \Lambda_{(k)}$ ($\omega(x)$ is the $\omega$-limit set of $x$).
\end{itemize}
\end{lem}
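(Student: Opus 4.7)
The plan is to analyze the flow in a neighborhood of $\Lambda$ via a Poincar\'e-type return map constructed around the cycle. Since $\Lambda$ is compact and consists of finitely many hyperbolic equilibria $\gamma_1,\dots,\gamma_k$ connected by heteroclinic orbits $F_1,\dots,F_k$, I would place small transverse cross-sections $\Sigma_i^{\text{in}},\Sigma_i^{\text{out}}$ near each $\gamma_i$ on either side of the equilibrium along the relevant invariant directions. Local linearization at $\gamma_i$ gives a passage map $\Sigma_i^{\text{in}}\to\Sigma_i^{\text{out}}$ whose leading behaviour is controlled by the ratio of the contracting to the expanding eigenvalues of the Jacobian at $\gamma_i$. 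Composing these local passage maps with nearly-identity transition maps along each $F_i$ yields a return map $R$ on $\Sigma_1^{\text{in}}\cap\mathrm{int}\,X$.

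For part $(i)$, I would argue by contradiction using the three hypotheses. If some $x\in U\setminus\partial X$ satisfied $x(t)\in U$ for all $t\ge 0$, then $\omega(x)$ would be a nonempty, closed, invariant subset of $\overline{U}$ by Lemma~\ref{lem:01}. The second hypothesis rules out $\omega(x)\subset\Lambda$, so $\omega(x)$ must meet $\mathrm{int}\,X$; but invariance, together with the third hypothesis (isolation of $\partial X$ near $\Lambda$), would force $\omega(x)$ to remain bounded away from $\partial X$, contradicting the requirement that $x(t)$ approach $\Lambda\subset\partial X$. Concretely, compactness of $\Lambda$ gives a uniform lower bound on how far the unstable directions push interior points away from each $\gamma_i$, so iterates of $R$ must eventually eject $x$ from $U$.

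For part $(ii)$, I would show that on a one-sided neighbourhood of $\Lambda$ in $X$, the return map $R$ is well-defined and contracts toward the fixed point of $R$ corresponding to the cycle. The hypothesis that every $x\in\Lambda$ has $\omega(x)\in\Lambda_{(k)}$ for some $k$ supplies the needed consistency of asymptotic behaviour along $\Lambda$, ruling out any ``chaotic'' internal recurrence within $\Lambda$ that would obstruct a Lyapunov construction. I would piece together local distance-to-$\Lambda$ functions on each section into a global function, nonincreasing along trajectories, using the hyperbolic structure at each $\gamma_i$ to control the passage maps. Together with the compactness and invariance of $\Lambda$, this would yield asymptotic stability.

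The principal obstacle is the logarithmic slowdown of trajectories near each equilibrium: the sojourn time diverges as a trajectory approaches $\gamma_i$, so the passage maps are not uniformly contracting or expanding but are governed by products of eigenvalue ratios $\prod_i (c_i/e_i)$. Making the Lyapunov/ejection arguments quantitative therefore requires careful uniform estimates for these passage maps and a delicate patching across the $k$ sections, which is the technical heart of the classical heteroclinic-cycle stability theorems cited here.
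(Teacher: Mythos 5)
The paper never proves this lemma: it is quoted as a known result from \cite{josef1994,M1997,Steindl1996HeteroclinicCI}, and the paper's own work (Appendix~\ref{boundary cycle}) consists only of verifying the characteristic-matrix (M-matrix) criterion from those references. So your sketch must stand on its own against the cited literature, and it has two genuine gaps. In part $(i)$ your argument is circular. The third hypothesis already grants a neighbourhood $U$ that every interior point leaves at some time; your contradiction hypothesis (``some $x\in U\setminus\partial X$ satisfies $x(t)\in U$ for all $t\geq 0$'') is essentially the negation of that hypothesis, so deriving a contradiction from it merely restates the assumption rather than proving the conclusion. ``Repelling'' in the cited sense is strictly stronger than leaving $U$ once: there must be a neighbourhood that every interior orbit eventually avoids permanently (equivalently, $\omega(x)\cap\Lambda=\emptyset$ for all interior $x$). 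The entire difficulty is to exclude orbits that exit $U$ and re-enter infinitely often, accumulating on $\Lambda$; that exclusion is where the no-interior-$\omega$-limit condition and the isolation of $\partial X$ must genuinely be combined (in \cite{josef1994} this is done via an average Lyapunov function), and your sketch never addresses it.

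In part $(ii)$ the key step --- ``the return map $R$ contracts toward the cycle'' --- cannot be obtained from the stated hypotheses, because they contain no information whatsoever about the transverse (external) eigenvalues at the saddles. The condition that every $x\in\Lambda$ has $\omega(x)$ in some $\Lambda_{(k)}$ constrains only the flow \emph{inside} $\Lambda$ (it is a Morse-decomposition statement), and it holds equally well when the cycle is repelling; it therefore cannot ``supply'' the contraction you need, contrary to what you claim. Whether $R$ contracts or expands is decided precisely by the saddle eigenvalue-ratio products that you defer to the final paragraph as a technicality --- equivalently, in this paper's setting, by whether the characteristic matrix of Appendix~\ref{boundary cycle} is or is not an M-matrix. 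Without importing such a condition as an additional hypothesis, the Lyapunov/contraction construction you outline would fail, since under the hypotheses as stated both stability and repulsion are possible. A correct route is either to follow Hofbauer's average-Lyapunov-function argument or, if you prefer Poincar\'{e} sections, to make the eigenvalue-ratio estimates the centerpiece of the proof rather than an afterthought.
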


\begin{lem}
\label{lem:03}
(Poincar\'{e}-Bendixson Theorem)
Let $\dot{x}=f(x)$ be an ODE defined on an open set $G\subseteq R^2$.
Let $\omega(x)$ be a nonempty compact $\omega$-limit set.
Then, if $\omega(x)$ contains no rest point,
it must be a periodic orbit.
\end{lem}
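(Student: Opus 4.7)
The plan is to invoke the classical planar dynamics argument based on transverse sections (local cross-sections) combined with the Jordan curve theorem. First I would collect the basic properties of the $\omega$-limit set of an orbit with compact closure: by hypothesis $\omega(x)$ is nonempty and compact, standard flow-continuity arguments give invariance under the flow and connectedness, and closedness is already recorded in Lemma \ref{lem:01}.

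Next I would establish the key geometric lemma for planar flows: for any non-equilibrium point $p$ there is a short line segment $\Sigma$ through $p$ transverse to $f$, and the successive intersections of any forward orbit with $\Sigma$ are monotonically ordered along $\Sigma$. This is the one place where planarity is essential. The proof takes two consecutive intersection points $q_1, q_2$, concatenates the orbit arc from $q_1$ to $q_2$ with the sub-segment of $\Sigma$ joining them to form a Jordan curve $J$, and uses Jordan's theorem to split $G$ into an inside and an outside. Uniqueness of solutions together with the constant sign of $f$ on $\Sigma$ force the orbit beyond $q_2$ to stay on one side of $J$, so the next crossing of $\Sigma$ must lie on the same monotone side as $q_2$ relative to $q_1$.

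Given the hypothesis that $\omega(x)$ contains no fixed points, pick any $q \in \omega(x)$; then $f(q) \neq 0$, so a transverse section $\Sigma$ through $q$ exists. Because $q \in \omega(x)$, the orbit of $x$ meets any neighbourhood of $q$ infinitely often, and transversality forces infinitely many crossings of $\Sigma$ converging to $q$. The monotonicity lemma then shows $\omega(x) \cap \Sigma = \{q\}$. Applying the same reasoning with $x$ replaced by $q$ (note $\omega(q) \subseteq \omega(x)$ by invariance and $\omega(q)$ is likewise free of equilibria) gives a point $q' \in \omega(q) \cap \Sigma$; by the previous sentence $q' = q$, so the orbit of $q$ returns to $q$ and is therefore periodic. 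Finally, connectedness of $\omega(x)$ together with $\omega(x) \cap \Sigma = \{q\}$ implies $\omega(x)$ coincides with this periodic orbit.

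The main obstacle I expect is the monotonicity argument via the Jordan curve theorem: although the picture is intuitive, a rigorous verification that the flow cannot re-enter the excluded side of $J$ requires invoking both uniqueness of solutions (to rule out crossing the orbit arc portion of $J$) and the consistent sign of $f$ on $\Sigma$ (to rule out re-crossing $\Sigma$ on the wrong side). The remaining ingredients — invariance and connectedness of $\omega$-limit sets, and the extraction of a periodic orbit from a single non-equilibrium $\omega$-limit point — are then standard consequences of continuity of the flow and the transverse-section machinery.
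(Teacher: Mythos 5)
The paper offers no proof of this lemma: it is the classical Poincar\'{e}--Bendixson theorem, stated by name and used as a black box, just as the neighbouring lemmas are quoted from the literature. Your proposal instead reconstructs the standard textbook proof via transverse sections and the Jordan curve theorem, and the outline is essentially the right one: monotonicity of successive crossings of a transversal, the consequence that an $\omega$-limit set meets any transversal in at most one point, and the extraction of a periodic orbit from a regular point of $\omega(x)$ are exactly the ingredients of the classical argument. Two steps would need tightening in a full write-up. First, connectedness and invariance of $\omega(x)$ are usually derived from the forward orbit having compact closure in $G$; under the bare hypothesis that $\omega(x)$ is nonempty and compact you need the short additional observation that the forward orbit is then eventually confined to a compact subset. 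Second, the step ``applying the same reasoning with $x$ replaced by $q$ gives a point $q'\in\omega(q)\cap\Sigma$'' is imprecise: $\Sigma$ is the transversal through $q$, and there is no a priori reason for $\omega(q)$ to meet $\Sigma$ or to contain $q$. The standard repair is to pick $p\in\omega(q)$ (nonempty because the orbit of $q$ stays in the compact invariant set $\omega(x)$), take a transversal $\Sigma_p$ through $p$, and note that the orbit of $q$ lies in $\omega(x)$ and hence can meet $\Sigma_p$ only in the single point of $\omega(x)\cap\Sigma_p$, forcing that orbit to return to $p$ and therefore be periodic; connectedness of $\omega(x)$ then identifies it with this periodic orbit. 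Neither point changes the architecture of your argument, which is a correct sketch of the classical proof; the authors' choice to cite the theorem rather than prove it is, however, the standard practice here.
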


\begin{theorem}\label{Theorem.2}
	(Existence of periodic orbit) If
	\begin{equation}\label{6}
		\frac{P_1-S_1}{T_1-R_1}>\frac{S_0-P_0}{R_0-T_0},
	\end{equation}
	and
	\begin{equation}\label{9}
		\begin{split}
			&[g(\beta(S_0-P_0))-g(\beta(P_0-S_0))]\times[g(\beta(T_1-R_1))-g(\beta(R_1-T_1))]>\\
			&[g(\beta(S_1-P_1))-g(\beta(P_1-S_1))]\times[g(\beta(T_0-R_0))-g(\beta(R_0-T_0))]
		\end{split}
	\end{equation}
	then there is a periodic orbit for the system given by Eq.~\ref{4}.
\end{theorem}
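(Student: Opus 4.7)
The plan is to apply the Poincaré–Bendixson theorem (Lemma~\ref{lem:03}) on the compact forward-invariant square $X=[0,1]^2$, which is forward-invariant for Eq.~\ref{4} because each of its four edges is a flow-invariant line. For any interior initial point $x_0$ the $\omega$-limit set $\omega(x_0)$ is nonempty and compact by Lemma~\ref{lem:01}, and Eq.~\ref{4} has exactly five rest points: the four corners and the interior point $(x^*,n^*)$. The first hypothesis of the theorem is the instability condition already handled by Theorem~\ref{Theorem.1}(iii), so $(x^*,n^*)$ is linearly unstable and its stable manifold is trivial; any $x_0$ in a punctured neighbourhood of $(x^*,n^*)$ therefore produces a forward orbit that leaves and stays out of a smaller such neighbourhood, so $(x^*,n^*)\notin\omega(x_0)$.

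The main obstacle is to convert the second hypothesis, Eq.~\ref{9}, into the repelling property of the boundary heteroclinic cycle $\Lambda$ in the sense of Lemma~\ref{lem:02}(i). I would linearize Eq.~\ref{4} at each corner by the product rule on $\dot x=x(1-x)[g(\beta(\pi_C{-}\pi_D))-g(\beta(\pi_D{-}\pi_C))]$ and on $\dot n=\epsilon n(1-n)(-1+(1+\theta)x)$. This gives, at $(0,0),(1,0),(1,1),(0,1)$ in order, contracting rates $\epsilon,\ g(\beta(R_0{-}T_0)){-}g(\beta(T_0{-}R_0)),\ \epsilon\theta,\ g(\beta(P_1{-}S_1)){-}g(\beta(S_1{-}P_1))$ and expanding rates $g(\beta(S_0{-}P_0)){-}g(\beta(P_0{-}S_0)),\ \epsilon\theta,\ g(\beta(T_1{-}R_1)){-}g(\beta(R_1{-}T_1)),\ \epsilon$. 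The first-return map along $\Lambda$ then has linear contraction ratio
\begin{equation*}
\rho=\prod_{i=1}^{4}\frac{\lambda_i^s}{\lambda_i^u}=\frac{[g(\beta(R_0{-}T_0)){-}g(\beta(T_0{-}R_0))]\,[g(\beta(P_1{-}S_1)){-}g(\beta(S_1{-}P_1))]}{[g(\beta(S_0{-}P_0)){-}g(\beta(P_0{-}S_0))]\,[g(\beta(T_1{-}R_1)){-}g(\beta(R_1{-}T_1))]},
\end{equation*}
the $\epsilon$ and $\theta$ factors cancelling telescopically. The cycle $\Lambda$ is repelling iff $\rho<1$, and flipping the two obvious minus signs on the right-hand side converts this into precisely Eq.~\ref{9}; Lemma~\ref{lem:02}(i) then supplies a neighbourhood $U$ of $\Lambda$ that every nearby interior orbit must eventually leave.

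Combining the two isolating neighbourhoods, I can choose an interior $x_0$ whose forward orbit is confined to a compact set $K\subset\mathrm{int}(X)$ that misses small neighbourhoods of both $(x^*,n^*)$ and $\Lambda$, and hence contains none of the five equilibria. Then $\omega(x_0)\subseteq K$ is nonempty, compact and rest-point-free, so Lemma~\ref{lem:03} forces $\omega(x_0)$ to be a periodic orbit of Eq.~\ref{4}. The principal technical hurdles in carrying out this plan are justifying the product-of-ratios stability criterion for the planar heteroclinic cycle and performing the sign bookkeeping at the four corners that transforms $\rho<1$ into the exact algebraic form of Eq.~\ref{9}; by contrast the invariance of $X$, the nonemptiness of $\omega(x_0)$ and the application of Poincaré–Bendixson are standard once those pieces are in place.
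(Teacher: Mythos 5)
Your proposal is correct and its skeleton is the same as the paper's: show the interior equilibrium is a source (Theorem~\ref{Theorem.1} under Inequality~\ref{6}), show the boundary heteroclinic cycle is repelling under Inequality~\ref{9}, conclude that some interior orbit has a nonempty compact $\omega$-limit set contained in the open square and free of rest points, and invoke Poincar\'{e}--Bendixson. The genuine difference is in how you establish the repelling property of $\Lambda$. The paper (\ref{boundary cycle}) embeds the system into four half-spaces, builds the $4\times4$ characteristic matrix of external eigenvalues at the corners, checks that the cycle is simple, and applies Hofbauer's M-matrix corollary: $\Lambda$ is repelling iff all leading principal minors are positive, which reduces to the sign of $\det C$ and hence to Inequality~\ref{9}. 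You instead use the classical planar saddle-quantity criterion, $\rho=\prod_i \lambda_i^{s}/\lambda_i^{u}<1$; your corner-by-corner bookkeeping of contracting and expanding rates is right (the $\epsilon$ and $\theta\epsilon$ factors do cancel around the loop), and after absorbing the two sign flips at $(1,0)$ and $(0,1)$ the condition $\rho<1$ is exactly Inequality~\ref{9}. Your route is more elementary and self-contained in two dimensions, at the cost of having to justify the first-return-map product criterion yourself (and, implicitly, that each edge really carries a heteroclinic connection in the stated direction); the paper's route imports that justification from the characteristic-matrix machinery, which also scales to higher dimensions and immediately yields the asymptotic-stability counterpart needed for Corollary~\ref{cor:2}. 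Both arguments share the same (standard, and equally unelaborated in the paper) step that a repelling cycle forces $\omega(x)\cap\partial X=\emptyset$ for suitable interior $x$, so your proposal is at the same level of rigor as the original.
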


\begin{proof}
	Here, we provide a sketch of the proof. 
To prove the existence of the periodic orbit, we need to find a nonempty compact $\omega-$limit set, which contains no rest point (Lemma.~\ref{lem:03}). 
The internal equilibrium $x^*$ is unstable and the heteroclinic cycle is repelling since inequalities.~\ref{6} and \ref{9} hold (See \ref{fixed points} and \ref{boundary cycle} for the more detailed proof). 
There exists a $x \in int[0,1]^2$ such that the internal equilibrium $x^* \notin \omega(x)$ and $\omega(x) \cap bd[0,1]^2 = \emptyset$. 
In addition, inequality.~\ref{6} implies that $\omega(x)$ contains no rest point. 
Thus, Theorem.~\ref{Theorem.2} is proved (See \ref{Existence of periodic orbit} for detailed proof).
\end{proof}

\begin{definition}
	\label{def:3}
	The isolated closed orbit is defined as the limit cycle.
\end{definition}

\begin{cor}
	\label{cor:1}
	For the system given by Eq.~\ref{4}, there is a stable limit cycle if the internal fixed point is unstable and the heteroclinic cycle is repelling (Inequalities.~\ref{6},~\ref{9} hold).
\end{cor}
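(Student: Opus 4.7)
The plan is to combine Theorem.~\ref{Theorem.2} with a monotone return-map argument on the annular region carved out by the unstable interior equilibrium and the repelling heteroclinic cycle. Under the hypotheses, Theorem.~\ref{Theorem.1} makes the interior equilibrium a repellor and Lemma.~\ref{lem:02} makes the boundary heteroclinic cycle repelling into the interior of $[0,1]^2$. Hence I can choose a small neighbourhood $W$ of the equilibrium and a small neighbourhood $V$ of the heteroclinic cycle such that $A=[0,1]^2\setminus(W\cup V)$ is compact, forward invariant, and free of rest points. By the Poincar\'{e}-Bendixson theorem (Lemma.~\ref{lem:03}), every $\omega$-limit set in $A$ is a periodic orbit of Eq.~\ref{4}.

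Next I would pick a transversal arc $\Sigma\subset A$ running from the inner to the outer boundary of $A$ and consider the first-return map $P:\Sigma\to\Sigma$. Because the flow on the annulus $A$ is planar and orientation-preserving, $P$ is a continuous monotone self-map of the interval $\Sigma$, and its fixed points correspond bijectively to the periodic orbits in $A$. The difference $\Phi(s)=P(s)-s$ satisfies $\Phi>0$ near the inner end of $\Sigma$, since trajectories are pushed outward off the unstable equilibrium, and $\Phi<0$ near the outer end, since trajectories are pushed inward off the repelling heteroclinic cycle.

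Assuming the fixed-point set of $P$ is discrete, say $s_1<s_2<\cdots<s_k$, the signs of $\Phi$ on the consecutive open subintervals of $\Sigma$ form a sequence $(\sigma_0,\sigma_1,\ldots,\sigma_k)$ with $\sigma_0=+$ and $\sigma_k=-$. There must therefore exist an index $i$ with $\sigma_i=+$ and $\sigma_{i+1}=-$; at the fixed point $s_{i+1}$ the iterates of $P$ converge from both sides, so the corresponding periodic orbit is asymptotically stable. Asymptotic stability forces isolation among periodic orbits, so Definition~\ref{def:3} identifies it as a stable limit cycle.

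The main obstacle is the degenerate case in which the fixed-point set of $P$ is not discrete, i.e., $\Phi\equiv 0$ on some subinterval of $\Sigma$ and $A$ is foliated by a continuum of periodic orbits. For Eq.~\ref{4} this can be excluded by a Dulac-type calculation: the non-trivial nonlinearity of the imitation function $g$ prevents the integral of the divergence of the vector field along a candidate one-parameter family of closed orbits from vanishing identically, so the Floquet multiplier is not uniformly one and periodic orbits cannot sweep out an open annular subregion of $A$.
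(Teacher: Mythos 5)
Your argument is correct in outline but takes a genuinely different route from the paper's. The paper applies Theorem.~\ref{Theorem.2} twice: it picks a point $x_1$ near the unstable interior equilibrium and a point $x_2$ near the repelling boundary, obtains two nested periodic orbits $\omega(x_1)$ and $\omega(x_2)$, and declares the annular region $D$ between them a trapping region that ``must contain a limit cycle.'' You instead build the trapping annulus $A$ directly from the two repelling neighbourhoods and run a Poincar\'{e} first-return map $P$ on a full transversal; the sign change of $\Phi(s)=P(s)-s$ from the inner end to the outer end then produces a fixed point that attracts from both sides. Your route is more elementary in that it does not need Theorem.~\ref{Theorem.2} as a black box, and it buys strictly more: the sign-change argument actually delivers the asymptotic \emph{stability} of the cycle, whereas the paper's proof never explains why any closed orbit in $D$ should be isolated or attracting.

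That said, your final step is a claim rather than a proof. You correctly identify the degenerate possibility that $\Phi\equiv 0$ on a subinterval of $\Sigma$ (an open annulus foliated by closed orbits), but the proposed exclusion --- that the nonlinearity of $g$ forces the divergence integral along such a family to be nonzero --- is asserted without any computation, and it is not obvious for an arbitrary $g$ satisfying only properties $(i)$ and $(ii)$ of the model (strictly increasing and $C^1$); analyticity is not assumed, so one cannot even argue that $P=\mathrm{id}$ on an interval forces $P=\mathrm{id}$ globally. To be fair, the paper's own proof silently skips the same issue, so this is a shared lacuna rather than a defect unique to your argument; but to make your version complete you must either carry out the Dulac/divergence estimate explicitly or restrict the class of imitation functions so that the return map cannot be the identity on an interval.
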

\begin{proof}
	The detailed proof of Corollary.~\ref{cor:1} is shown in \ref{Stable limit cycles}.
\end{proof}

\textbf{Remark 2:} (Estimation of the limit cycle position.)
The proof of Corollary.~\ref{cor:1} also provides a way to estimate the position of the stable limit cycle.
Let us denote $D=\{x\in[0,1]\times [0,1]| div f(x^*)\times dif f(x)>0\}$, in which $x^*$ is the unique equilibrium.
Then the limit cycle cannot be within $D$.
In fact, there is no limit cycle in $D$.
If it is not true, then there exits a limit cycle $\gamma$,
whose interior is denoted $\Gamma\in D$.
Based on the Green theorem, we have that
$\int_{\Gamma} div f(x) d(x_1,x_2)=\pm\int_0^T[f_2(x)\dot{x}_1(t)-f_1(x)\dot{x}_2(t)]dt$,
where $T$ is the period of limit cycle $\gamma$.
The left hand side is negative or positive, while the right hand side is zero since $f_i=\dot{x}_i$, $i=1,2$,
a contradiction.
Therefore the limit cycle (which is proved to be present)
has to be outside $D$.
This could facilitate us to find the limit cycle.

\begin{lem}
	\label{lem:04}
	\cite{10.1145/3372885.3373833,zbMATH02707188} Consider $\dot{x}=f(x)$ as an ODE defined on an open set $G\subseteq R^2$. The $\alpha-$limit set of a point $x$ is the set of points that the flow from $x$ tends to in negative time, which denoted as $\alpha(x)$, is also a closed set.
\end{lem}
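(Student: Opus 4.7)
The plan is to reduce the claim directly to Lemma~\ref{lem:01} by time reversal. Since $\dot{x}=f(x)$ defines a local flow $\phi_t$ on $G$, the substitution $s=-t$ turns the ODE into the reversed system $\dot{x}=-f(x)$, whose forward flow $\psi_s$ coincides with $\phi_{-s}$. Under this identification, the set of accumulation points of $\phi_t(x)$ as $t\to -\infty$ equals the set of accumulation points of $\psi_s(x)$ as $s\to +\infty$; symbolically, $\alpha_f(x)=\omega_{-f}(x)$. Because Lemma~\ref{lem:01} makes no sign assumption on the vector field, it applies equally to $-f$, so $\omega_{-f}(x)$ is closed and therefore $\alpha(x)$ is closed. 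This is the shortest route and is what I would present first.

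If a self-contained argument is preferred, I would give a direct diagonal proof. Take a sequence $y_k\to y$ with each $y_k\in\alpha(x)$ and show $y\in\alpha(x)$. By definition, for every $k$ there exists a time sequence $t_n^{(k)}\to -\infty$ with $\phi_{t_n^{(k)}}(x)\to y_k$ as $n\to\infty$. For each $k$, I would pick $n_k$ large enough that simultaneously $t_{n_k}^{(k)}<-k$ and $\|\phi_{t_{n_k}^{(k)}}(x)-y_k\|<1/k$. Then $t_{n_k}^{(k)}\to -\infty$, and the triangle inequality gives $\|\phi_{t_{n_k}^{(k)}}(x)-y\|\leq 1/k+\|y_k-y\|\to 0$, so $y\in\alpha(x)$. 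Hence $\alpha(x)$ contains all its limit points and is closed.

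The only subtlety, and therefore the main thing to be careful about, is that the negative semi-orbit through $x$ must be defined for all $t\leq 0$ in order to speak of $\alpha(x)$ at all; otherwise one must restrict to the maximal interval of existence. In the eco-evolutionary setting of Eq.~\ref{4}, this is automatic: the unit square $[0,1]^2$ is both forward- and backward-invariant under the flow, so complete trajectories exist and $\alpha(x)$ is a nonempty subset of this compact invariant set. Apart from this domain check, the argument is a verbatim transcription (with time reversed) of the standard $\omega$-limit proof already invoked in Lemma~\ref{lem:01}, so I do not anticipate any further obstacles.
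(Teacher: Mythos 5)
Your proposal is correct. Note, however, that the paper does not prove Lemma~\ref{lem:04} at all: it is imported by citation, exactly as Lemma~\ref{lem:01} is, and is then used only once (in \ref{Unstable limit cycles}) to run the stable-case argument backwards in time. So there is no in-paper proof to compare against; what you have supplied is a self-contained justification of a cited fact. Both of your routes are sound: the reduction $\alpha_f(x)=\omega_{-f}(x)$ via the reversed field $-f$ is the cleanest, since Lemma~\ref{lem:01} applies verbatim to $-f$, and it is in fact the same time-reversal trick the paper itself uses informally in the proof of Corollary~\ref{cor:2} ("replacing $t\to+\infty$ with $t\to-\infty$"). Your diagonal argument is the standard direct proof and is also correct; an equivalent one-line alternative is to write $\alpha(x)=\bigcap_{T\leq 0}\overline{\{\phi_t(x):t\leq T\}}$ as an intersection of closed sets. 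Your closing caveat about the backward orbit being globally defined is well taken and is satisfied here because $[0,1]^2$ is compact and invariant under Eq.~\ref{4}; the paper never makes this point explicit, so your version is, if anything, more careful than the source.
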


\begin{cor}
	\label{cor:2}
	For the system given by Eq.~\ref{4}, there is an unstable limit cycle if
		\begin{equation}\label{5}
		\frac{P_1-S_1}{T_1-R_1}<\frac{S_0-P_0}{R_0-T_0},
	\end{equation}
	and 
	\begin{equation}\label{10}
		\begin{split}
			&[g(\beta(S_0-P_0))-g(\beta(P_0-S_0))]\times[g(\beta(T_1-R_1))-g(\beta(R_1-T_1))]<\\
			&[g(\beta(S_1-P_1))-g(\beta(P_1-S_1))]\times[g(\beta(T_0-R_0))-g(\beta(R_0-T_0))]
		\end{split}.
	\end{equation}
\end{cor}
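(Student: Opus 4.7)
The plan is to mirror the proof of Corollary~\ref{cor:1} by reversing time and reusing the machinery already built up. First, under inequality~\ref{5}, Theorem~\ref{Theorem.1}(ii) tells me the interior equilibrium $x^*$ is locally asymptotically stable. Second, inequality~\ref{10} is literally the reversal of inequality~\ref{9}, so I would run the same transverse-eigenvalue computation that appears in \ref{boundary cycle}: at each of the four corner saddles $(0,0),(1,0),(1,1),(0,1)$ the linearization has one stable and one unstable eigenvalue, and the product of contraction rates divided by the product of expansion rates is controlled precisely by the left-hand minus right-hand side of (9)--(10). Under (10) this product exceeds~1, which by Lemma~\ref{lem:02}(ii) places the heteroclinic cycle $\Lambda$ in the asymptotically stable case.

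Now I would time-reverse: consider $\dot y = -f(y)$ on the square $X=[0,1]^2$. Because each edge is an invariant submanifold (the factor $x(1-x)$ kills the $\dot x$ equation on $x\in\{0,1\}$, and $n(1-n)$ kills $\dot n$ on $n\in\{0,1\}$), the reversed flow is well-defined on $X$ and leaves every fixed point and every heteroclinic orbit set-wise invariant, only flipping the orientation. Under this reversal, the locally asymptotically stable interior equilibrium becomes a repeller, and the asymptotically stable heteroclinic cycle $\Lambda$ becomes a repelling heteroclinic cycle in the precise sense of Lemma~\ref{lem:02}(i). These are exactly the hypotheses needed for the proof of Theorem~\ref{Theorem.2} and Corollary~\ref{cor:1} to apply to the reversed system, yielding a stable limit cycle $\gamma$ for $\dot y=-f(y)$.

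Reversing time once more, $\gamma$ is a closed orbit of the original system, and trajectories that approached it in reverse time now depart from it in forward time; equivalently, its $\alpha$-limit property (accessible via Lemma~\ref{lem:04}) translates into repulsion under the original flow. Thus $\gamma$ is an isolated closed orbit from which nearby trajectories are repelled, i.e., an unstable limit cycle in the sense of Definition~\ref{def:3}.

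The main obstacle is the second bullet: making precise the claim that (10) flips the heteroclinic cycle from repelling to asymptotically stable. The appendix \ref{boundary cycle} only addresses the repelling case under~(9), so I would need to verify that the two bullet points in Lemma~\ref{lem:02}(ii) -- compact invariance of $\Lambda\subset\partial X$ and the existence of a decomposition $\Lambda_{(k)}$ absorbing $\omega$-limits on $\Lambda$ -- hold in our setting. The first bullet is immediate from the boundary-invariance argument above, and the second follows because each heteroclinic orbit $F_i$ on $\partial X$ has $\omega(F_i)$ equal to the next corner equilibrium. The subtler quantitative point is that the eigenvalue-product inequality coming from (10) gives a Jacobian-type contraction near $\Lambda$ in the transverse direction, ruling out interior $\omega$-limits landing on $\Lambda$ from outside; this is where I expect the bookkeeping to be heaviest and where I would most carefully reuse the computations from the proof of Theorem~\ref{Theorem.2}.
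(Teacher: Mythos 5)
Your proposal is correct and follows essentially the same route as the paper: the paper's proof of Corollary~2 likewise reduces to the stable-limit-cycle case by passing to $\alpha$-limit sets (i.e., time reversal) and reusing the Poincar\'e--Bendixson argument from the proof of Theorem~2 together with the trapping-region construction of Corollary~1. The gap you flag at the end is not actually present in the paper: the appendix on the heteroclinic cycle already treats inequality (10) via the characteristic-matrix (M-matrix) criterion, which is equivalent to your eigenvalue-ratio-product condition and directly yields asymptotic stability of the cycle.
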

\begin{proof}
	The detailed proof of Corollary.~\ref{cor:2} is shown in \ref{Unstable limit cycles}.
\end{proof}

To further show the difference between dynamical outcomes of the eco-evolutionary dynamics with general imitation and that of the classical eco-evolutionary dynamics, we give the phase diagrams when the limit cycle is unstable (Fig.~\ref{Fig.3}) and stable (Fig.~\ref{Fig.4}).
In addition, numerical simulation (Fig.~\ref{Fig.5}) are performed to show that there is a limit cycle for the system given by Eq.~\ref{4} under different imitation functions.
\begin{figure}
	\begin{minipage}{0.5\linewidth} 
	\centerline{\includegraphics[]{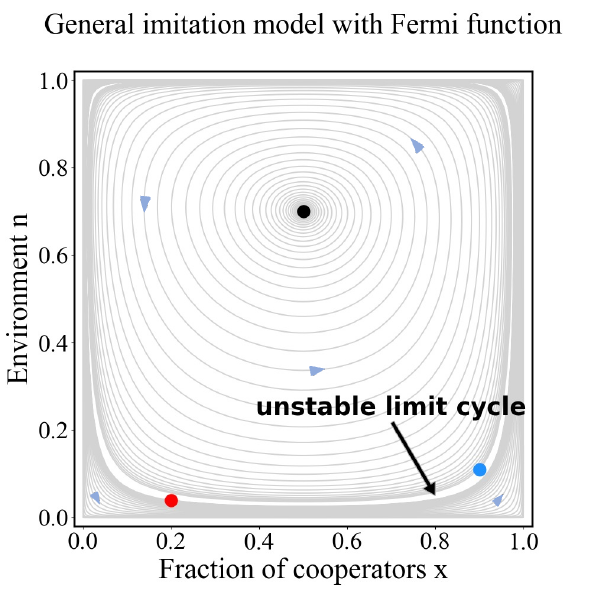}}
	\end{minipage}
	\begin{minipage}{0.5\linewidth} 
	\centerline{\includegraphics[]{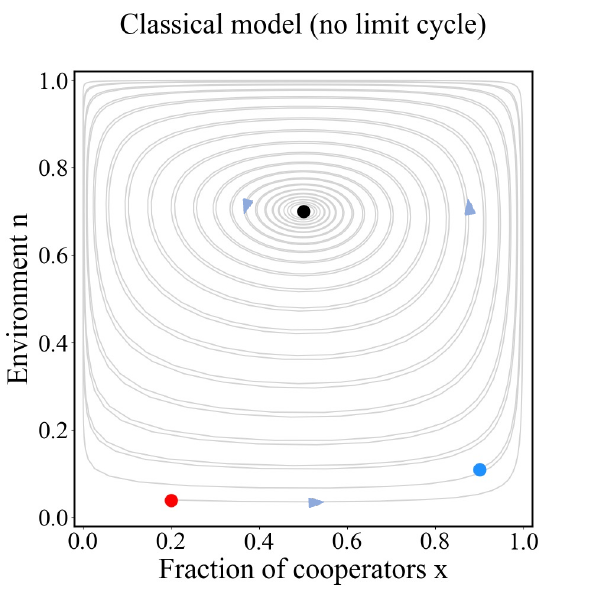}}
	\end{minipage}
	\caption{\textbf{Differences between eco-evolutionary dynamics with general imitation function and the classical eco-evolutionary dynamics when the internal fixed point is stable.}
		Phase diagrams of x-n in eco-evolutionary dynamics with general imitation function (Left). 
		Diagrams of x-n in the classical eco-evolutionary dynamics (Right). 
		The red solid circle and blue solid circle represent the initial points (0.2, 0.04), (0.9, 0.11), respectively. 
		Imitation function leads to an unstable limit cycle, which repels its inner trajectories to the vicinity of the internal fixed point, and repels its outer trajectories to the vicinity of the heteroclinic cycle. 
		In other words, different initial points lead the system to different steady states. 
		However, the internal fixed point is always stable regardless of the initial points in the classical model. 
		Parameters for both graphs are $R_0=6, S_0=6, T_0=4, P_0=1, R_1=2, S_1=4, T_1=3, P_1=6, \epsilon=1, \theta=1, \beta=1$.
	}\label{Fig.3}
\end{figure}

\begin{figure}
	\begin{minipage}{0.5\linewidth} 
	\centerline{\includegraphics[]{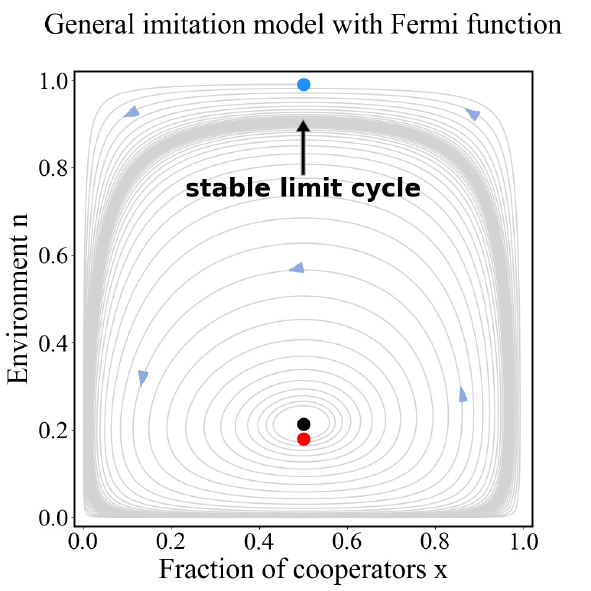}}
	\end{minipage}
	\begin{minipage}{0.5\linewidth} 
	\centerline{\includegraphics[]{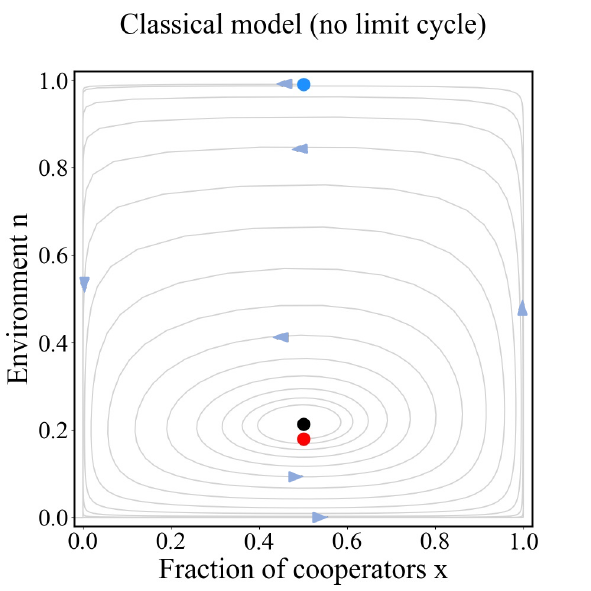}}
	\end{minipage}
	\caption{\textbf{Differences between eco-evolutionary dynamics with general imitation function and the classical eco-evolutionary dynamics when the internal fixed point is unstable.}
		Phase diagrams of $x$-$n$ in eco-evolutionary dynamics with general imitation function (Left).
		Phase plane of $x$-$n$ in the classical eco-evolutionary dynamics (Light).
		The red solid circle and blue solid circle represent the initial points $(0.5,0.2)$, $(0.5,0.99)$, respectively. 
		Imitation function leads to a stable limit cycle, which attracts trajectories to its surrounding. 
		However, the internal fixed point is always unstable regardless of the initial points in the classical model. 
		Parameters for both graphs are $R_0=2, S_0=9, T_0=1, P_0=7, R_1=6, S_1=1, T_1=9, P_1=9, \epsilon=1, \theta=1, \beta=1$.
	}\label{Fig.4}
\end{figure}

\begin{figure}
	\centering
	\begin{tiny}
	\begin{equation*} 
	~~~~g(x)=\frac{1}{1+e^{- x}}
	~~~~~~~~~~~~~~~~~g(x)=\frac{1}{\sqrt{2\pi}}\int_{-\infty}^{x}e^{-\frac{t^2}{2}}dt~~~~~~~~~~~~~~~~~g(x)=
	\begin{cases}
	 1-e^{- x}, x>0\\
	 0, x \leq 0
	 \end{cases}
	\end{equation*}
	\end{tiny}\\
	\setlength{\parskip}{-1em}
	\begin{minipage}{0.3\linewidth} 
	\centerline{\includegraphics[width=1\linewidth]{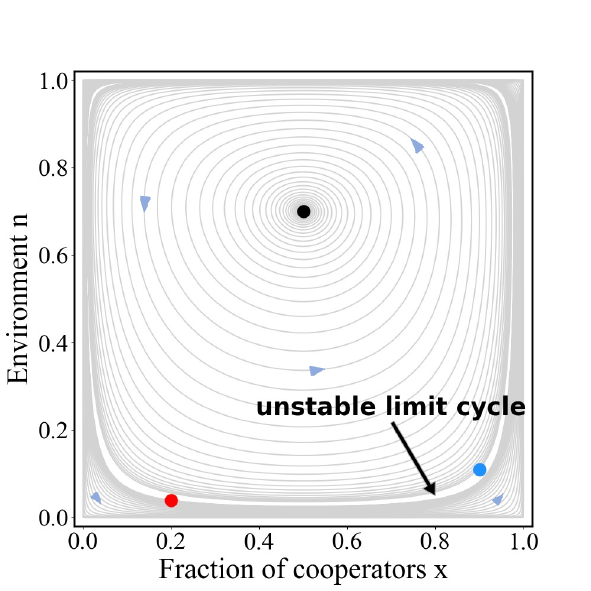}}
	\end{minipage}
	\begin{minipage}{0.3\linewidth} 
	\centerline{\includegraphics[width=1\linewidth]{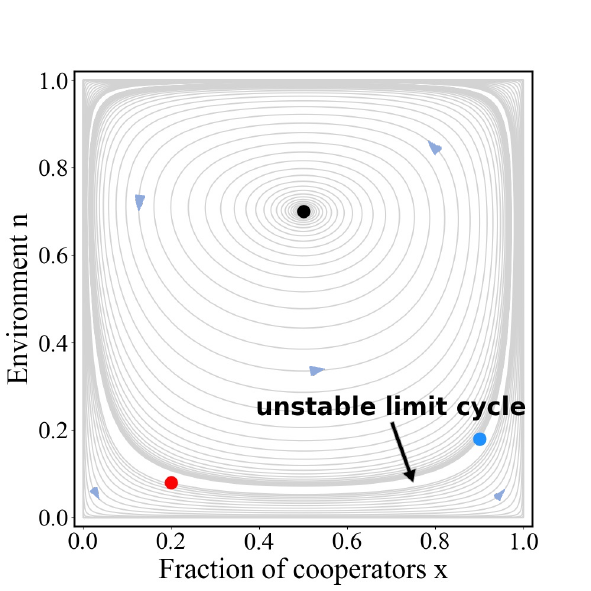}}
	\end{minipage}
	\begin{minipage}{0.3\linewidth} 
	\centerline{\includegraphics[width=1\linewidth]{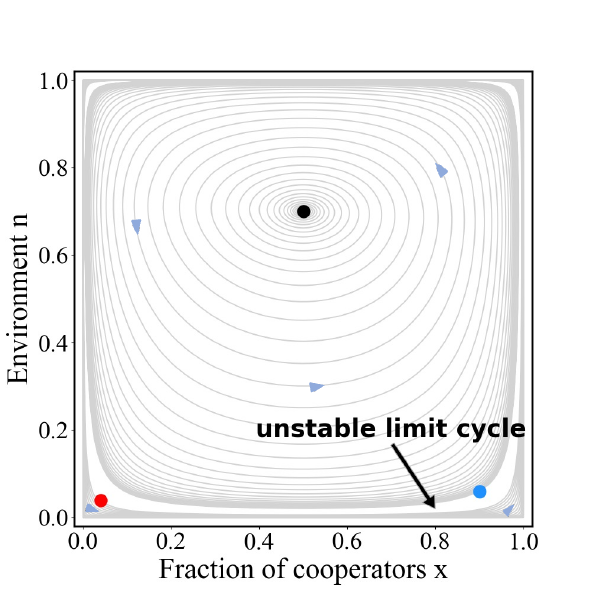}}
	\end{minipage}\\
	\begin{minipage}{0.3\linewidth} 
	\centerline{\includegraphics[width=1\linewidth]{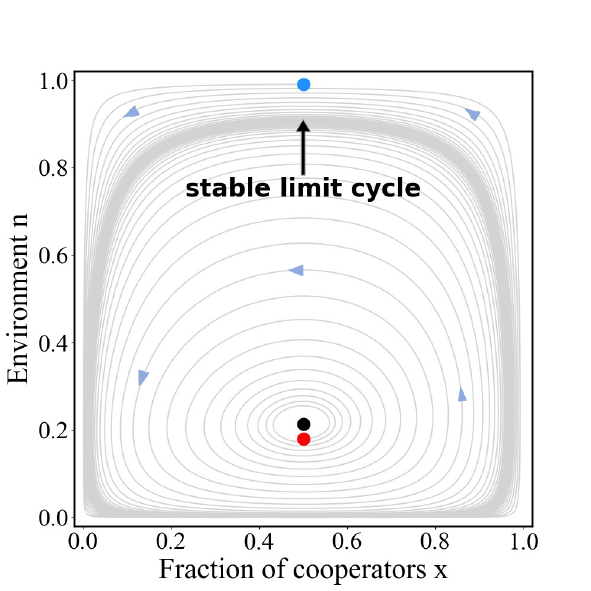}}
	\end{minipage}
	\begin{minipage}{0.3\linewidth} 
	\centerline{\includegraphics[width=1\linewidth]{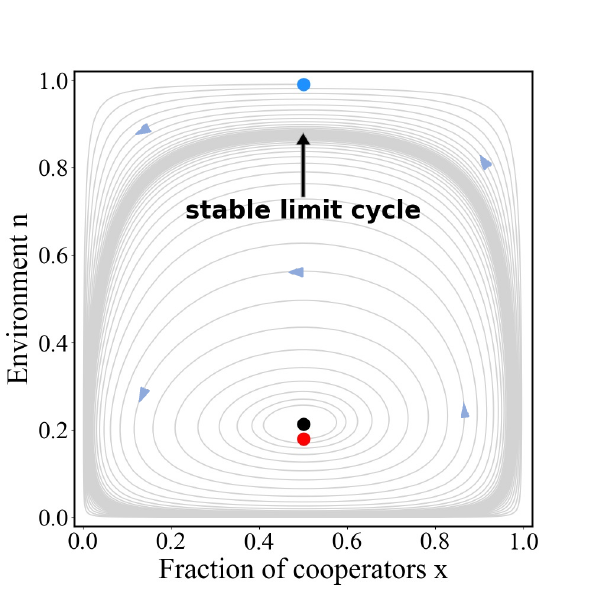}}
	\end{minipage}
	\begin{minipage}{0.3\linewidth} 
	\centerline{\includegraphics[width=1\linewidth]{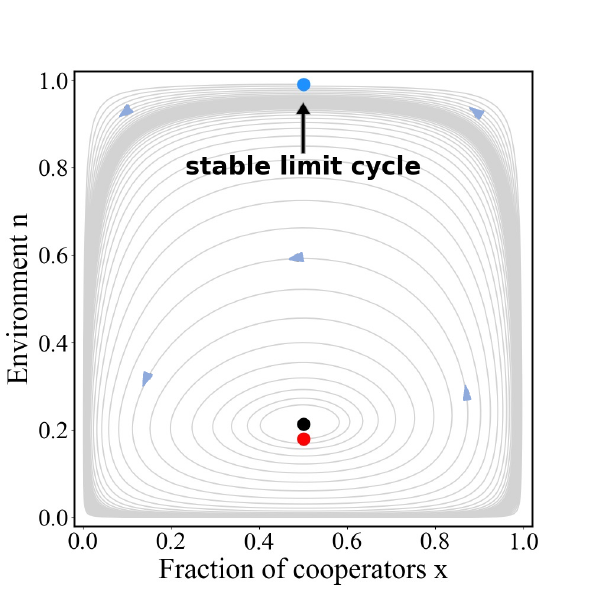}}
	\end{minipage}
	\caption{\textbf{Existence of limit cycles for eco-evolutionary dynamics.}
		Each column has the same imitation function, which is shown as the top of figure.
		There are an unstable limit cycle in the first row and a stable limit cycle in the second row under three different imitation functions.
		The red and blue circles denote the initial points, and the black circle is the internal equilibria.
		The equilibria is stable in the first row and it is unstable in the second row.
		Parameters are the same as in Fig.~\ref{Fig.3} and Fig.~\ref{Fig.4}, respectively.
	}\label{Fig.5}
\end{figure}

\subsection{Full classification of dynamical outcomes}
So far, we have shown that the global eco-evolutionary dynamics can be sensitive to imitation functions, although the local stability of the internal equilibrium is robust to imitation functions.
Here, to further understand how the imitation function alters the global dynamics, we distinguish the dynamical outcomes of the system into four cases based on the stability of both the internal fixed point and heteroclinic cycle (See Fig.~\ref{summary}):
\begin{figure}
	\centering
	\begin{tiny}
	\begin{equation*} 
	\begin{split}
		&[g(\beta(S_0-P_0))-g(\beta(P_0-S_0))]\times[g(\beta(T_1-R_1))-g(\beta(R_1-T_1))]>\\
		&[g(\beta(S_1-P_1))-g(\beta(P_1-S_1))]\times[g(\beta(T_0-R_0))-g(\beta(R_0-T_0))]
	\end{split}
	\end{equation*}
	\end{tiny}\\
	\centering
	\begin{minipage}{0.45\linewidth} 
	\centerline{\includegraphics[width=1\linewidth]{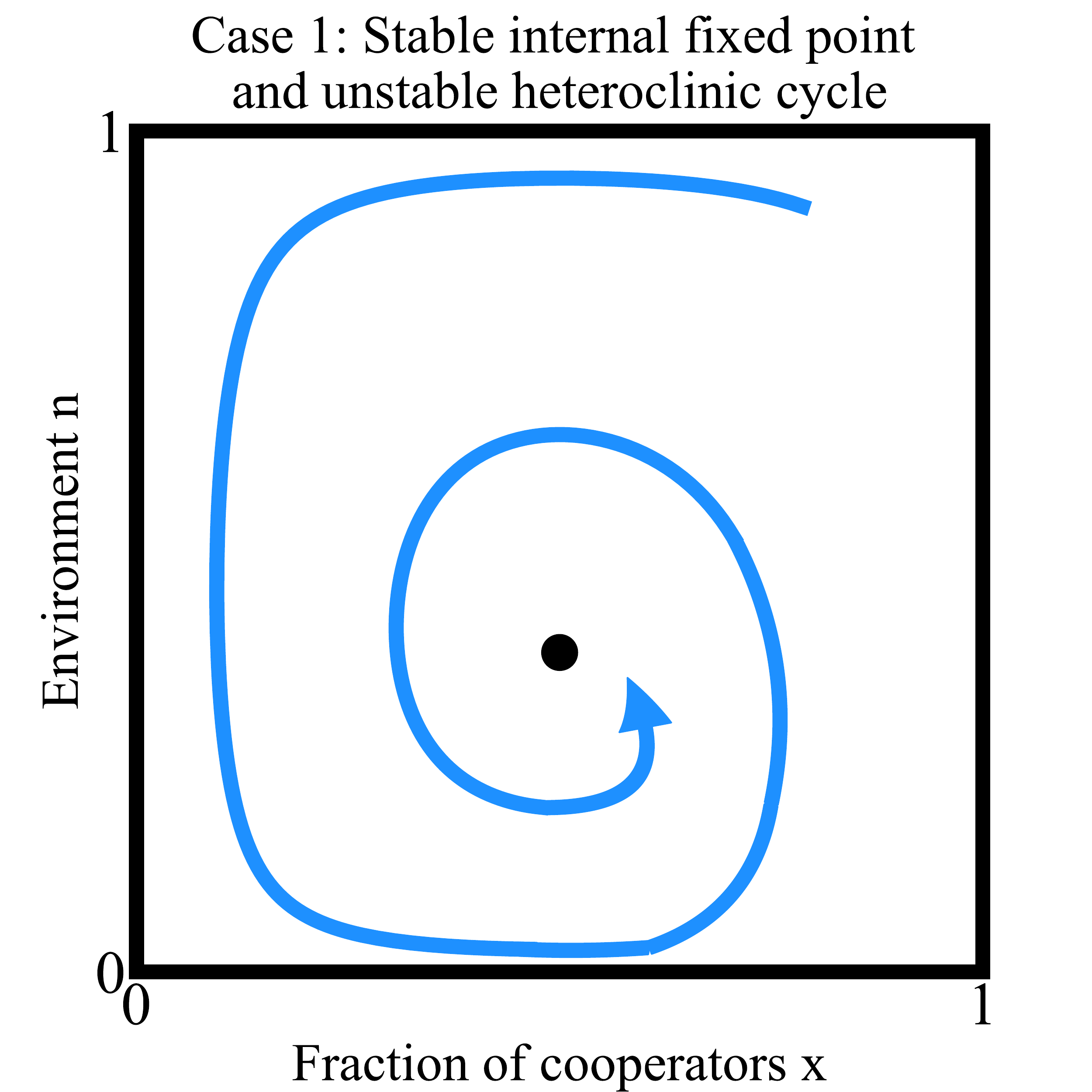}}
	\end{minipage}
	\begin{minipage}{0.45\linewidth} 
	\centerline{\includegraphics[width=1\linewidth]{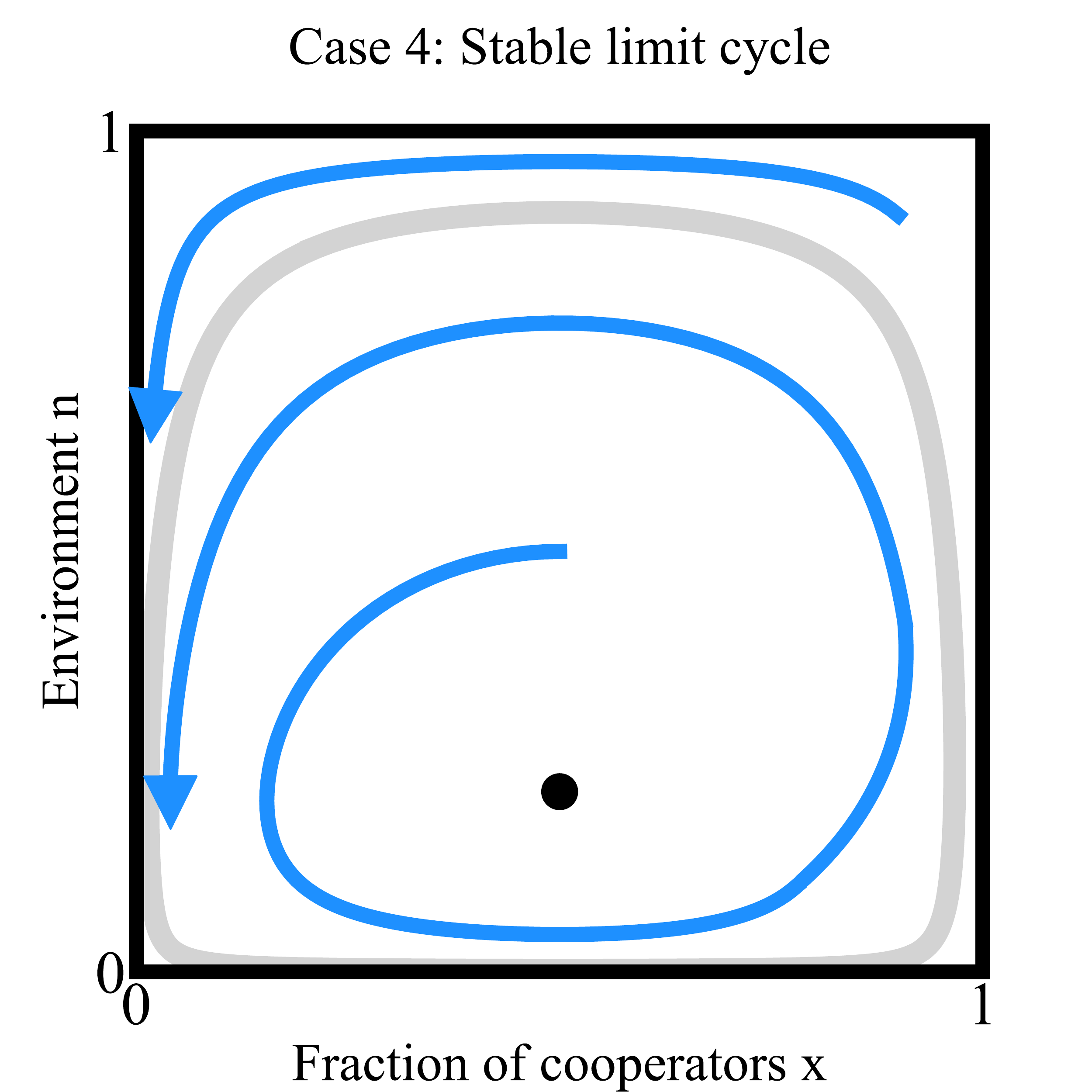}}
	\end{minipage}\\
	\vspace{-16em}
\begin{tikzpicture}
\filldraw  [black](0,0)rectangle(12,0.05);
\filldraw  [black](6,0)rectangle(6.05,6);
\filldraw  [black](6,0)rectangle(6.05,-6);
\end{tikzpicture}
	\vspace{-18em}
	\begin{tiny}
	\begin{equation*} 
	\frac{P_1-S_1}{T_1-R_1}>\frac{S_0-P_0}{R_0-T_0}
~~~~~~~~~~~~~~~~~~~~~~~~~~~~~~~~~~~~~~~~~~~~~~~~~~~~~~~~~~~~~~~~~~~~~~~~~~~~~~~~~~~~~~~~~~~~~~~~
		\frac{P_1-S_1}{T_1-R_1}<\frac{S_0-P_0}{R_0-T_0}
	\end{equation*}
	\end{tiny}\\
	\vspace{-1em}
	\begin{minipage}{0.45\linewidth} 
	\centerline{\includegraphics[width=1\linewidth]{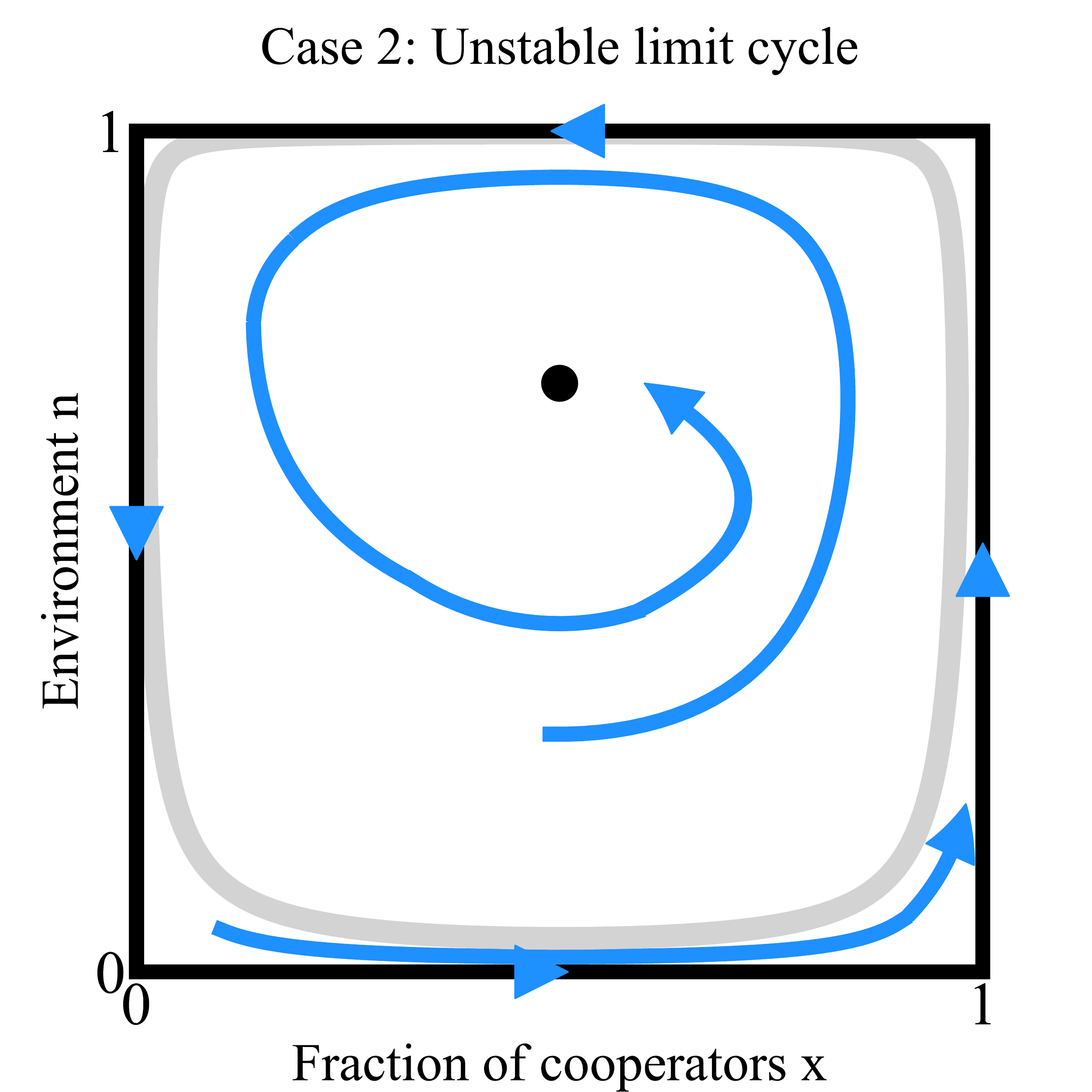}}
	\end{minipage}
	\begin{minipage}{0.45\linewidth} 
	\centerline{\includegraphics[width=1\linewidth]{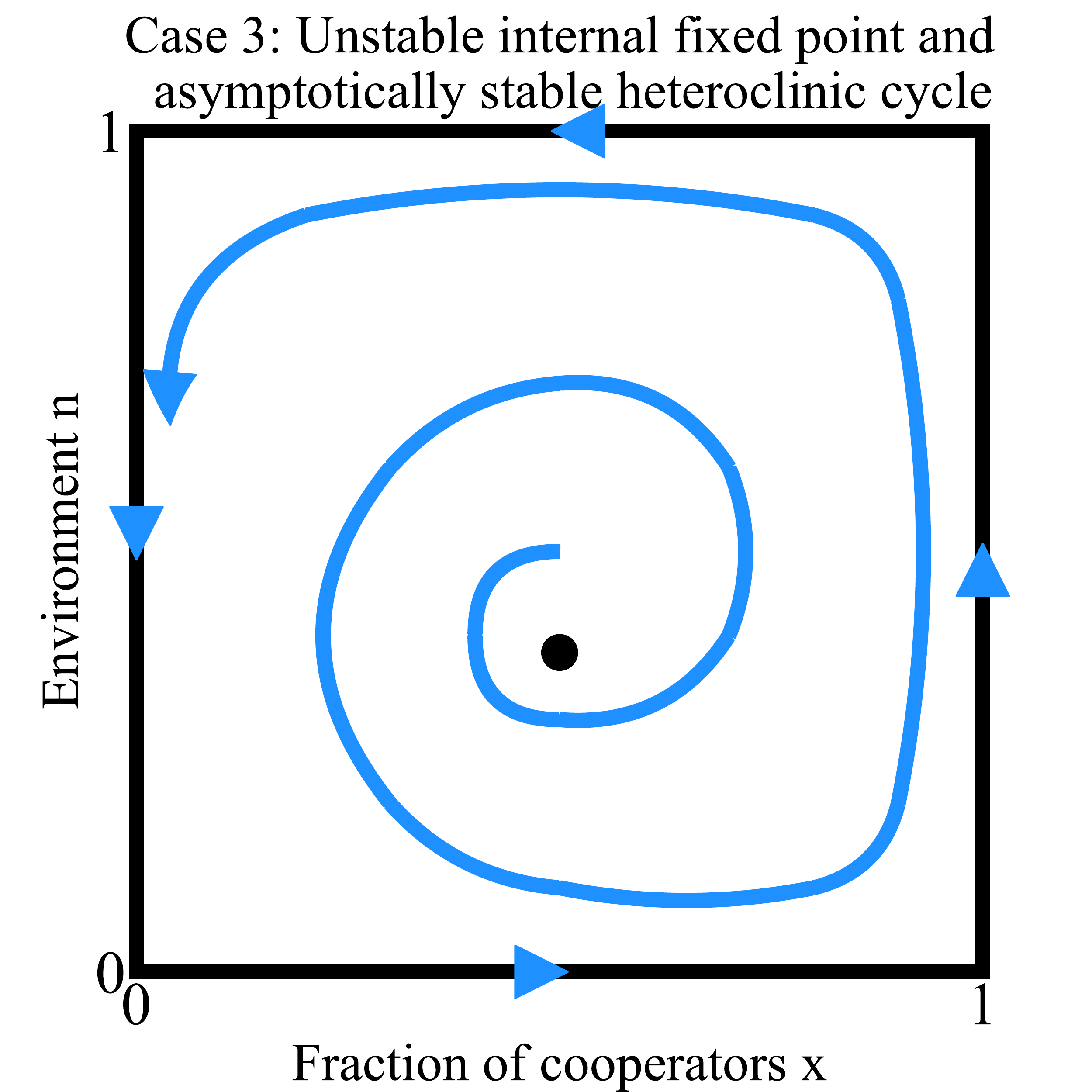}}
	\end{minipage}
	\begin{tiny}
	\begin{equation*} 
	\begin{split}
		&[g(\beta(S_0-P_0))-g(\beta(P_0-S_0))]\times[g(\beta(T_1-R_1))-g(\beta(R_1-T_1))]<\\
		&[g(\beta(S_1-P_1))-g(\beta(P_1-S_1))]\times[g(\beta(T_0-R_0))-g(\beta(R_0-T_0))]
	\end{split}
	\end{equation*}
	\end{tiny}
	\vspace{-3em}
	\caption{\textbf{Dynamics of general imitation function.}
	The black circle denotes the internal fixed point and gray line in case 2 and 4 refers to a limit cycle.
	There is an unstable limit cycle when the internal fixed point is stable and the heteroclinic cycle is asymptotically stable (case 2).
	And there is a stable limit cycle when the internal fixed point is unstable and the heteroclinic cycle is repelling (case 4).
	These two phenomenons (case 2 and 4) are never present in classical eco-evolutionary dynamics.
	The phenomena that the internal fixed point is stable and the heteroclinic cycle is repelling (case 1), and that the internal fixed point is unstable and the heteroclinic cycle is asymptotically stable (case 3), also occur in classical eco-evolutionary model.
	}\label{summary}
\end{figure}

\textbf{Case 1:} The internal fixed point is stable and the heteroclinic cycle is repelling (Inequalities.~\ref{9},~\ref{5} hold). 
In this case, the population becomes a stable mixture of cooperation and defection types.
The tragedy of the commons is averted in this case.
This phenomenon also occurs in classical eco-evolutionary dynamics when $\frac{P_1-S_1}{T_1-R_1}<\frac{S_0-P_0}{R_0-T_0}$ \cite{weitz2016oscillating}.

\textbf{Case 2:} The internal fixed point is stable and the heteroclinic cycle is asymptotically stable (Inequalities.~\ref{5},~\ref{10} hold).
Numerical simulations show that there is an unstable limit cycle in the state space (see Fig.~\ref{Fig.1}), which is consistent with the Corollary.~\ref{cor:2}.
The intuition is below: a fraction of trajectories is attracted by the internal fixed point and other fraction of trajectories flow the heteroclinic cycle.
Due to no bifurcation in two-dimensional systems, there must be unstable limit cycle in the system, which repels its inner trajectories to the vicinity of the internal fixed point, and repels its outer trajectories to the vicinity of the heteroclinic cycle.
The tragedy of the commons is also averted in this case.
However, this phenomenon can't occur in the classical eco-evolutionary dynamics.

\textbf{Case 3:} The internal fixed point is unstable and the heteroclinic cycle is asymptotically stable (Inequalities.~\ref{6},~\ref{10} hold).
In this case, the system finally moves close to the heteroclinic cycle and cycles counterclockwise, which is termed an oscillating tragedy of the commons.
The defectors increases when the environment is replete, and the increase of defectors leads to the deterioration of the environment, and vice versa.
This phenomenon also occurs in classical eco-evolutionary model when $\frac{P_1-S_1}{T_1-R_1}>\frac{S_0-P_0}{R_0-T_0}$ \cite{weitz2016oscillating}.

\textbf{Case 4:} The internal fixed point is unstable and the heteroclinic cycle is repelling (Inequalities.~\ref{6},~\ref{9} hold).
In this case, numerical simulations show that there is a stable limit cycle (see Fig.~\ref{Fig.2}).
The fixed point repels its neighbouring trajectories to its outside and the heteroclinic cycle repels its vicinal trajectories to its inside.
Therefore, there is a stable limit cycle in the system which attracts trajectories to its surrounding.
This is also the phenomenon of the oscillating tragedy of commons.

\begin{figure}
	\begin{minipage}{0.5\linewidth} 
	\includegraphics[scale=0.25]{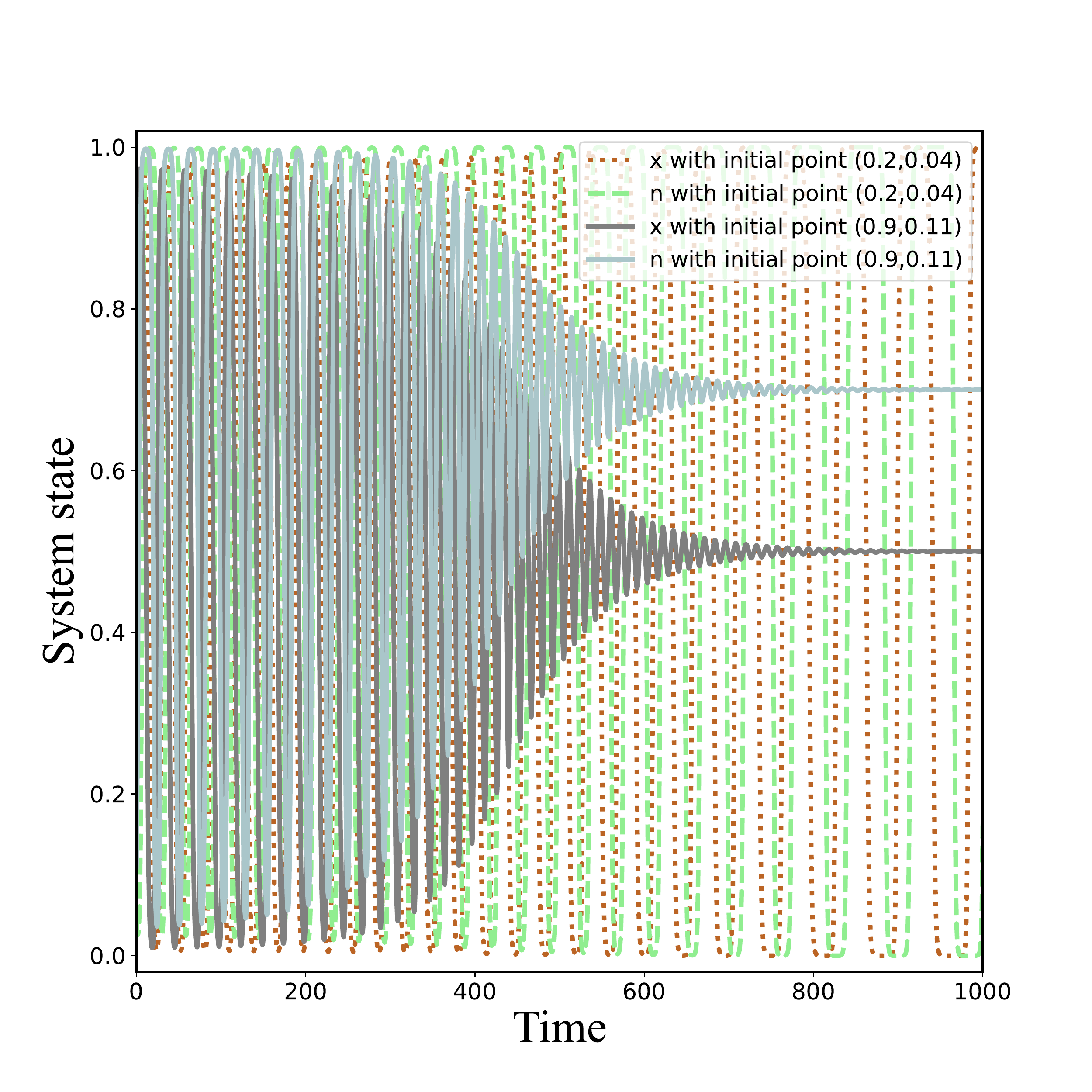}
	\end{minipage}
	\begin{minipage}{0.5\linewidth} 
	\includegraphics[scale=1]{fig4-1-1.pdf}
	\end{minipage}
\caption{\textbf{Existence of unstable limit cycles for eco-evolutionary dynamics.} 
Time series of the fraction of cooperators $x$ and the environment $n$ of eco-evolutionary dynamics with general imitation function (Left). 
Two solid lines and two dotted lines denote that the initial points is $(0.9,0.11)$ and $(0.2,0.04)$, respectively. 
Phase diagrams of the eco-evolutionary dynamics of $x$-$n$ (Right). 
The parameters are the same as Fig.~\ref{Fig.3}, and the imitation function is $g(\pm\beta(\pi_C-\pi_D))=\frac{1}{1+e^{\mp\beta(\pi_C-\pi_D)}}$.
}\label{Fig.1}
\end{figure}

\begin{figure}
	\begin{minipage}{0.5\linewidth} 
	\includegraphics[scale=0.25]{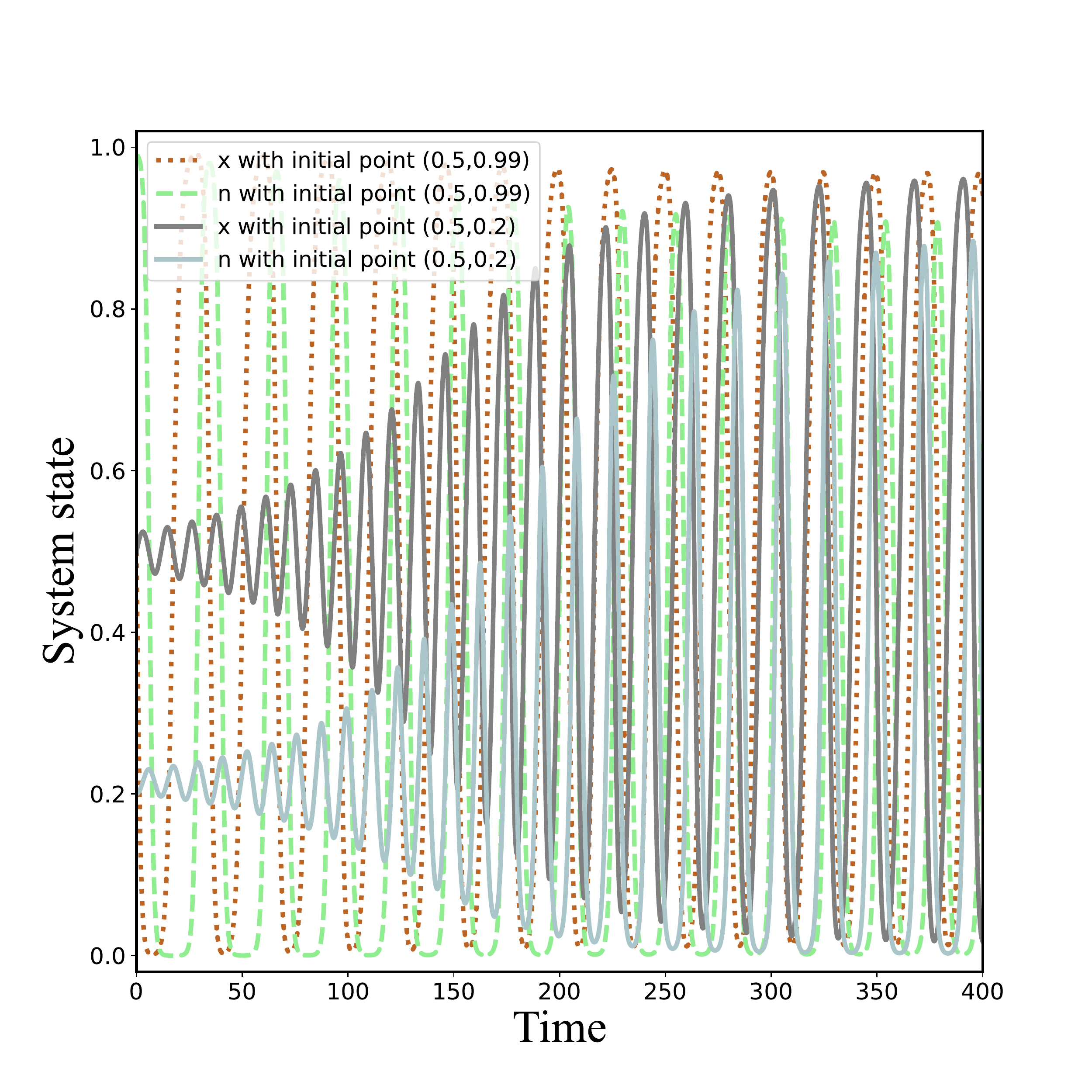}
	\end{minipage}
	\begin{minipage}{0.5\linewidth} 
	\includegraphics[scale=1]{fig4-1-2.pdf}
	\end{minipage}
\caption{\textbf{Existence of stable limit cycles for eco-evolutionary dynamics.}
Time series of the fraction of cooperators $x$ and the environment $n$ of eco-evolutionary dynamics with general imitation function (Left). 
Two solid lines and two dotted lines denote that the initial points is $(0.5,0.2)$ and $(0.5,0.99)$, respectively. 
Phase diagrams of the eco-evolutionary dynamics of $x$-$n$ (Right).
The parameters are the same as Fig.~\ref{Fig.4}, and the imitation function is $g(\pm\beta(\pi_C-\pi_D))=\frac{1}{1+e^{\mp\beta(\pi_C-\pi_D)}}$.
}\label{Fig.2}
\end{figure}

\subsection{Average cooperation level is approximately consistent with the first component of the internal fixed point}
The internal fixed point  $(x^*,n^*)$ is stable in case 1, which represents cooperators and defectors coexist in an intermediate environment. 
Regardless of the initial state of the system, $x^*$ and $n^*$ are the average cooperation level and the average environmental state, respectively, when the time is sufficiently long.
Stable limit cycles in case 4 describe the oscillatory coexistence of cooperators and defectors in the oscillating tragedy of commons. 
Similarly, we focus on the average cooperation level $\bar{x}$ and the average environment state $\bar{n}$ of the system when the time is sufficiently long. 
And we find that $\bar{x}\approx x^*$ and $\bar{n}\approx n^*$, in which $x^*=\frac{1}{1+\theta}$ and $n^*=\frac{(T_0-R_0)+\theta(P_0-S_0)}{(R_1-T_1+T_0-R_0)+\theta(S_1-P_1+P_0-S_0)}$ (See \ref{average} for detail).

\section{Discussion and Conclusions}
Eco-evolutionary dynamics is present, provided that the interplay between individuals' behaviors and the surrounding environment is taken into account.
Previous works on eco-evolutionary dynamic assumes that individuals imitate each other based on the linear imitation function, i.e., the replicator equation. 
The last decade has seen a generalization of the replicator equation by taking into account the dynamics of the environment. 
For example, Weitz et al. has proposed the replicator dynamics of evolutionary game with feedback mechanism \cite{weitz2016oscillating}. 
They assume that cooperators make the environment better and defectors make it worse.
However, alternative imitation rules other than replicator equations are neglected in eco-evolutionary dynamics.
In fact, imitation rule is widely adopted, which implies that an individual imitates other's strategy with a probability determined by the payoff difference. 
It is far from known how individuals imitate each other. 
And it has been unclear whether the replicator equation is the way individuals imitate each other up till now.
In other words, a replicator-like equation is way too specific. 

Motivated by this, we have extrapolated from this assumption, and have assumed that individuals imitate based on a general imitation function, which fulfills two natural assumptions. 
In this case, the replicator equation refers to the linear imitation function. 
It is noteworthy that the evolution of the strategy cannot be affected by the imitation rules in the sense of the global dynamics, provided that the environment remains unchanged.
Furthermore, we have compared the dynamical results with those of classical eco-evolutionary dynamics proposed by Weitz et al. \cite{weitz2016oscillating}: The fixed point and its local stability of the two eco-evolutionary dynamics are the same.
This implies that the local dynamics near the internal equilibrium cannot be altered by extrapolating from replicator to general imitation function.
The major difference lies in the stability of the heteroclinic cycle.
There is an unstable limit cycle when the internal fixed point is stable and the heteroclinic cycle is asymptotically stable,
and there is a stable limit cycle when the internal fixed point is unstable and the heteroclinic cycle is repelling in the eco-evolutionary dynamics driven by general imitation function.
These two cases \textit{cannot} occur in the classical eco-evolutionary dynamics.

The oscillatory coexistence is present ranging from biology to social systems and the limit cycles are found in lots of non-linear dynamics. 
Some classic approaches, such as Lasalle’s theorems \cite{la1976stability, ahmadi2007control, filippov2013differential, lasalle1976stability}, the Lyapunov stability theorem \cite{lyapunov1907general, reza2017robust, chellaboina2008nonlinear}, Poincaré maps \cite{hiskens2001stability, gonccalves2005regions} and Hamilton’s principle \cite{patil2001limit}, are widely used to prove the existence and the stability of the limit cycles. 
These approaches are rigorous but complex, because they are used to prove the stability of limit cycles from the global state of the dynamic system. 
Our approach, however, just needs to take into account the local stability of both the internal equilibrium and the heteroclinic cycles.
In Weitz et. al., one of their main results is that there cannot be any limit cycles in their model. 
It is assumed that individuals update their behavior via replicator equation, which results from the linear imitation function of payoff differences. 
However, the model is not sufficient to explain eco-evolutionary systems in which the internal period orbits are present. 
This suggests that additional mechanisms are required to explain the fluctuating behavior in biological and social systems \cite{Sanchez2013}. 
We prove that non-linear imitation is sufficient to produce the oscillatory dynamics.
Previous work has revealed that mutation \cite{gong2020limit, mobilia2010oscillatory}, multi-strategy \cite{mobilia2010oscillatory, hauert2002volunteering, peltomaki2008three, szolnoki2014cyclic} and empty space \cite{hauert2008ecological, peltomaki2008three, szolnoki2014cyclic} can be a way to such limit cycle. 
Our contribution is to highlight that the non-linearity in the imitation function alone is sufficient to create such stable oscillatory behavior, without mutation, multi-strategy nor empty space. 

To sum up, we have proposed a model of eco-evolutionary dynamics with the environment feedback mechanism based on general imitation functions.
We have answered what if individuals update their strategies via different imitation functions:
i) It does not change the position of the internal equilibrium and its local stability.
ii) It can lead to limit circle (both stable and unstable) for general imitation function.
The results are surprising that the global eco-evolutionary dynamics can be greatly altered by the way how individuals imitate.
Therefore, our work highlights the importance of the way of imitation in eco-evolutionary dynamics, which is typically absent in classical evolutionary dynamics with static environment.

\appendix
\section{\label{fixed points}Stability of fixed points}
There are five fixed points in the imitation-environment coevolution dynamics for Eq.~\ref{4}:
four of these are in ``boundary'', that is, \\
$(i)$ $(x^*=0,n^*=0)$, \\
$(ii)$ $(x^*=1,n^*=0)$, \\
$(iii)$ $(x^*=0,n^*=1)$,\\
$(iv)$ $(x^*=1,n^*=1)$,\\
and an interior fixed point \\
$(v)$  $(x^*=\frac{1}{1+\theta},n^*=\frac{(T_0-R_0)+\theta(P_0-S_0)}{(R_1-T_1+T_0-R_0)+\theta(S_1-P_1+P_0-S_0)})$.

We investigate the fixed points and analyze their stabilities with the aid of Jacobian matrices: 
i) if determinant of the matrix $det(J(x^*,n^*))$ is positive and trace of it $tr(J(x^*,n^*))$ is negative, the fixed point $(x^*,n^*)$ is stable;
ii)  if determinant $det(J(x^*,n^*))$ is positive and trace $tr(J(x^*,n^*))$ is positive, the fixed point $(x^*,n^*)$ is unstable;
iii) if determinant $det(J(x^*,n^*))$ is negative and trace $tr(J(x^*,n^*))$ is either zero or indeterminacy, the fixed point $(x^*,n^*)$ is a saddle.
For simplicity, denote the Jacobian matrix of the system described by Eq.~\ref{4} as $J$, which yields:
\begin{equation}\label{7}
	J=
	\left(
	\begin{array}{cc}
		\frac{\partial \dot{x}}{\partial x}&\frac{\partial \dot{x}}{\partial n}\\
		\frac{\partial \dot{n}}{\partial x}&\frac{\partial \dot{n}}{\partial n}\\
	\end{array}
	\right),
\end{equation}
where
\begin{equation}\nonumber
	\begin{split}
		\frac{\partial \dot{x}}{\partial x}_{(x^{*},n^{*})}
		&=(1-2x^*)(g(\beta(\pi_C-\pi_D))-g(\beta(\pi_D-\pi_C)))\\
		&+2\beta x^*(1-x^*)g'(\beta(\pi_C-\pi_D))\frac{\partial (\pi_C-\pi_D)}{\partial x}|_{(x^{*},n^{*})},\\
		\frac{\partial \dot{x}}{\partial n}_{(x^{*},n^{*})}
		&=2\beta x^*(1-x^*)g'(\beta(\pi_C-\pi_D))\frac{\partial (\pi_C-\pi_D)}{\partial n}|_{(x^{*},n^{*})},\\
		\frac{\partial \dot{n}}{\partial x}_{(x^{*},n^{*})}
		&=\epsilon n^{*}(1-n^{*})(1+\theta),\\
		\frac{\partial \dot{n}}{\partial n}_{(x^{*},n^{*})}
		&=\epsilon (1-2n^{*})(-1+(1+\theta)x^{*}).\\
	\end{split}
\end{equation}

So we have Jacobian matrices at four ``boundary'' fixed points:
\begin{equation}\nonumber
		J_{(0,0)}=\begin{bmatrix}
			g(\beta(S_0-P_0))-g(\beta(P_0-S_0)) & 0 \\
			0 & -\epsilon
		\end{bmatrix}; 
\end{equation}
\begin{equation}\nonumber
J_{(1,0)} = \begin{bmatrix}
			-(g(\beta(R_0-T_0))-g(\beta(T_0-R_0))) & 0 \\
			0 & \theta\epsilon
		\end{bmatrix};
\end{equation}

\begin{equation}\nonumber
	J_{(0,1)}=\begin{bmatrix}
		g(\beta(S_1-P_1))-g(\beta(P_1-S_1)) & 0 \\
		0 & \epsilon
	\end{bmatrix};
\end{equation}
\begin{equation}\nonumber
 J_{(1,1)} = \begin{bmatrix}
			-(g(\beta(R_1-T_1))-g(\beta(T_1-R_1))) & 0 \\
			0 & -\theta\epsilon
	        \end{bmatrix}.
\end{equation}

Recall that the function $g(\beta(\pi_j-\pi_i))$ is strictly increasing and $R_0>T_0,S_0>P_0,$ $R_1<T_1, S_1<P_1$, $\epsilon>0$, $\theta>0$.
All of the Jacobian matrixes all have one positive eigenvalue and one negative eigenvalue.
Therefore all the four ``boundary'' fixed points are unstable.

As for the Jacobian matrix at the internal fixed point $(x^*,n^*)$, denoted as $P$, we have

\begin{equation}\label{8}
	J|_{P}={
		\left[ \begin{array}{cc}
			2\beta x^{*}(1-x^{*})g'(0)\frac{\partial (\pi_C-\pi_D)}{\partial x}|_{P} & 2\beta x^{*}(1-x^{*})g'(0)\frac{\partial (\pi_C-\pi_D)}{\partial n}|_{P} \\
			\epsilon n^{*}(1-n^{*})(1+\theta) & 0\\
		\end{array}
		\right ]},
\end{equation}
where
\begin{equation}\nonumber
	\begin{split}
		\frac{\partial(\pi_C-\pi_D)}{\partial x}|_{P}&=\frac{(R_1-T_1)(P_0-S_0)+(S_1-P_1)(R_0-T_0)}{(R_1-R_0-S_1+S_0-T_1+T_0+P_1-P_0)x^{*}+(S_1-S_0-P_1+P_0)};\\
		\frac{\partial(\pi_C-\pi_D)}{\partial n}|_{P}&=(R_1-R_0-S_1+S_0-T_1+T_0+P_1-P_0)x^{*}+(S_1-S_0-P_1+P_0).\\
	\end{split}
\end{equation}

Due to the restriction on the payoff matrix, $R_0>T_0,S_0>P_0,$ $R_1<T_1, S_1<P_1$,
we have $\frac{\partial(\pi_C-\pi_D)}{\partial n}<0$ in the domain $(0,1)^2$.
As such, the determinant of this Jacobian is positive.
It implies, the two eigenvalues are of the same sign.
Then the stability of the interior fixed point only depends on the trace of $J|_{P}$, i.e., the sum of the two eigenvalues.
Further, the sign of the trace of $J|_{P}$ depends on the sign of $\frac{\partial(\pi_C-\pi_D)}{\partial x}|_P$,
which is equivalent to the sign of $\frac{P_1-S_1}{T_1-R_1}-\frac{S_0-P_0}{R_0-T_0}$.
Thus, the internal fixed point is stable if $\frac{\partial(\pi_C-\pi_D)}{\partial x}|_P$ is negative, otherwise, the matrix (Eq.~\ref{8}) has two positive eigenvalues and the internal fixed point is unstable.
$\hfill\square$

\section{\label{boundary cycle}Stability of heteroclinic cycle}
The state space of system given by Eq.~\ref{4} is $X=[0,1]^2$ with boundary $\partial X$, which is invariant and compact set.
The heteroclinic cycle $\Lambda$ is the $\partial X$ under our dynamics, i.e., $\Lambda=\partial X$.
We leverage the theory of characteristic matrix of the heteroclinic cycle \cite{josef1994}.
The state space $X$ is defined as the intersection of four halfspaces,
\begin{equation}\label{11}
	X=\{x\geq0\}\cap\{1-n\geq0\}\cap\{n\geq0\}\cap\{1-x\geq0\}.
\end{equation}

\noindent Define $x_1=x$, $x_2=1-n$, $x_3=n$, $x_4=1-x$. Then we can rewrite our differential equation (Eq.~\ref{4}) in the form
\begin{equation}\label{12}
	\begin{cases}
		\dot{x_1}=x_1x_4(g(\beta(\pi_C-\pi_D))-g(\beta(\pi_D-\pi_C))) \\
		\dot{x_2}=-\epsilon x_2x_3(-1+(1+\theta)x_1) \\
		\dot{x_3}=\epsilon x_2x_3(-1+(1+\theta)x_1) \\
		\dot{x_4}=-x_1x_4(g(\beta(\pi_C-\pi_D))-g(\beta(\pi_D-\pi_C))).
	\end{cases}
\end{equation}

For convenience, we note
\begin{equation}\label{13}
	\dot{x_j}=x_jf_j(x),   j=1,2,3,4.
\end{equation}

Then the characteristic matrix $C$ is composed by the external eigenvalues
\begin{equation}\label{14}
	C_{k,j}=f_j(F_k),   k,j=1,2,3,4.
\end{equation}
where $F_k$ is the ``boundary'' fixed point.
$C_{k,j}$ denotes the entry in row $k$ and column $j$ in characteristic matrix $C$.
Then the characteristic matrix $C$ reads
\begin{equation}\label{15}
	\begin{array}{c|cccc}
		& x_{1} & x_{2} & x_{3} & x_{4} \\
		\hline(0,0) & g(\beta(S_0-P_0))-g(\beta(P_0-S_0)) & 0 & -\epsilon & 0 \\
		(0,1) & g(\beta(S_1-P_1))-g(\beta(P_1-S_1)) & \epsilon & 0 & 0 \\
		(1,0) & 0 & 0 & \theta\epsilon  &  g(\beta(T_0-R_0))-g(\beta(R_0-T_0)) \\
		(1,1) & 0 & -\theta\epsilon  & 0 & g(\beta(T_1-R_1))-g(\beta(R_1-T_1))
	\end{array}
\end{equation}

\noindent Recall that $S_0-P_0>0$, $S_1-P_2<0$, $T_0-R_0<0$, $T_1-R_1>0$ and the imitation function $g$ is strictly increasing.
We find that each row and each column of $C$ contains only one positive entry and one negative entry, so $\Lambda$ is a simple heteroclinic cycle.

From the corollary 1 in \cite{josef1994}, we know that

\noindent $(i)$ The heteroclinic cycle $\Lambda$ is repelling if C is an M-matrix (all leading principal minors of C are positive).

\noindent $(ii)$ The heteroclinic cycle $\Lambda$ is asymptotically stable if $C$ is not an M-matrix.

From Eq.~\ref{15}, we get that the first, second, and the third leading principle minors of the characteristic matrix $C$ are all positive.
Then whether the characteristic matrix $C$ is an M-matrix depends on the sign of the determinant of $C$, which reads
\begin{equation}\label{16}
	\begin{split}
		det\,C=&\theta\epsilon^2([g(\beta(S_0-P_0))-g(\beta(P_0-S_0))]\times[g(\beta(T_1-R_1))-g(\beta(R_1-T_1))]\\
		&-[g(\beta(S_1-P_1))-g(\beta(P_1-S_1))]\times[g(\beta(T_0-R_0))-g(\beta(R_0-T_0))]).
	\end{split}
\end{equation}
Let $det\,C\neq0$, then $C$ is an M-matrix and $\Lambda$ is repelling if the Inequality.~\ref{9} holds.
And $C$ is not an M-matrix and $\Lambda$ is asymptotically stable if the Inequality.~\ref{10} holds.
$\hfill\square$

\section{\label{Existence of periodic orbit}Existence of periodic orbit (proof of Theorem.~\ref{Theorem.2})}
\begin{proof}
The ODE given by Eq.~\ref{4} is defined on $[0,1]^2$, which is a bounded set,
thus any  $y\in int [0,1]^2$, the $\omega(y)$ is nonempty.
Since Inequality.~\ref{6} holds, based on Theorem.~\ref{Theorem.1}, the internal equilibrium $x^*$ is unique and unstable.
Then there exists a $x\in int [0,1]^2$ such that $x^*\notin \omega(x)$.
On the other hand, $bd[0,1]^2$ is a repelling heteroclinic cycle because Inequality.~\ref{9} holds (See Appendix.~\ref{boundary cycle} for the more detailed proof).
We have $\omega(x)\cap bd[0,1]^2=\emptyset$.
Therefore, $\omega(x)\subseteq int [0,1]^2$.
Based on Lemma.~\ref{lem:01}, we have that
$\omega(x)$ is a closed bounded set in $R^2$,
thus $\omega(x)$ is compact set.
In addition, since $x^*$ is the unique equilibrium by assumption,
it implies that $\omega(x)$ contains no rest point.
By Lemma.~\ref{lem:03},
we have that $\omega(x)$ has to be a periodic orbit.
\end{proof}

\section{\label{Stable limit cycles}Stable limit cycles (proof of Corollary.~\ref{cor:1})}
\begin{proof}
	In the neighborhood of the internal fixed point $x^*$, there exists a $x_1 \in O(x^*) \cap [0,1]^2$, such that $\omega(x_1)$ is a periodic orbit. 
	Then we can find a $x_2$ near $bd[0,1]^2$ satisfying that $\omega(x_2)$ is also a periodic orbit, and the region enclosed by the closed orbit $\omega(x_2)$ includes the region enclosed by the closed orbit $\omega(x_1)$. 
	Consider the region $D$ enclosed by these two closed orbits. 
	Since the internal equilibrium is unstable and the boundary is a repelling heteroclinic cycle, the trajectory outside the region enters the region $D$ and does not go out of the region $D$. 
	By Definition.~\ref{def:3}, there must be a limit cycle in the loop $D$.
\end{proof}

\section{\label{Unstable limit cycles}Unstable limit cycles (proof of Corollary.~\ref{cor:2})}
\begin{proof}
	When the system given by Eq.~\ref{4} satisfies the Inequalities.~\ref{5},~\ref{10}, the internal fixed point is stable and the heteroclinic cycle is asymptotically stable. 
	Based on Lemma.~\ref{lem:04}, the $\alpha$-limit set $\alpha(x)$ by replacing $t \to +\infty$ with $t \to -\infty$ is a periodic orbit similar to the proof of Theorem.~\ref{Theorem.2}.
	We also find an isolated region $D$ that the trajectory inside the loop leaves the region $D$ and does not go in region $D$. 
	Thus, there is an unstable limit cycle. 
\end{proof}

\section{\label{average}Average frequency of stable limit cycles}
The cooperators and the environment are oscillate periodically if the limit cycles are stable.
The average cooperation level and the average environment state are given by,
\begin{equation}
\frac{1}{T} \int_{a}^{a+T} x(t)\, dt = \bar{x}, \quad \frac{1}{T} \int_{a}^{a+T} n(t)\, dt = \bar{n}, 
\end{equation}
where $T$ represents the period of limit cycles.
We find that
\begin{equation}
\frac{d}{dt}(log(n) - log(1-n)) = \frac{\dot{n}}{n} + \frac{\dot{n}}{1-n} = -1 +(1+\theta)x
\end{equation}
It follows by integration that
\begin{equation}
\int_{a}^{a+T} \frac{d}{dt}(log(n(t)) - log(1-n(t)))\,dt = \int_{a}^{a+T}(-1 +(1+\theta)x(t))\, dt,
\end{equation}
that is,
\begin{gather}
(log(n(a+T)) - log(1-n(a+T))) - (log(n(a)) - log(1-n(a))) \notag \\ 
= - T + (1+ \theta) \int_{a}^{a+T}x(t)\, dt.
\end{gather}
Since $n(a)=n(a+T)$, the average cooperation level is 
\begin{equation}
\bar{x} = \frac{1}{T} \int_{a}^{a+T}x(t)\, dt = \frac{1}{1+\theta}.
\end{equation}
Similarly, we have
\begin{equation}
\frac{d}{dt}(log(x) - log(1-x)) = \frac{\dot{x}}{x} + \frac{\dot{x}}{1-x} = g(\beta(\pi_C-\pi_D))-g(\beta(\pi_D-\pi_C))
\end{equation}
It is known that the left term of equal sign is equal to zero. 
To make the equation hold, it satisfies
\begin{equation}
\pi_C - \pi_D = 0,
\end{equation}
since $g$ is a strictly increasing and continuously differentiable function.
Then we get
\begin{equation}
\int_{a}^{a+T}[\pi_C(x(t), n(t)) - \pi_D(x(t), n(t))]\,dt = 0.
\end{equation}
Thus, the average environment state is $n^*=\frac{(T_0-R_0)+\theta(P_0-S_0)}{(R_1-T_1+T_0-R_0)+\theta(S_1-P_1+P_0-S_0)}$.

\bibliographystyle{elsarticle-num}
\bibliography{newtext}

\end{document}